\documentclass[5p,times,number,sort&compress]{elsarticle}

\usepackage{amsmath,amssymb,amsthm,mathtools,bm}
\usepackage{booktabs}
\usepackage{enumitem}
\usepackage[ruled]{algorithm}
\usepackage{algpseudocode}
\usepackage{graphicx}
\usepackage{array}
\usepackage{xcolor}
\usepackage[final]{microtype}
\journal{Automatica}
\graphicspath{{revision_artifacts/}}
\setlist[enumerate]{topsep=2pt,itemsep=1pt,parsep=0pt,partopsep=0pt}

\newtheorem{assumption}{Assumption}

\newtheorem{lemma}{Lemma}
\newtheorem{theorem}{Theorem}
\newtheorem{corollary}{Corollary}

\newtheorem{proposition}{Proposition}

\newcommand{\SO}{\mathrm{SO}(3)}
\newcommand{\R}{\mathbb{R}}
\newcommand{\Exp}{\mathrm{Exp}}
\newcommand{\Log}{\mathrm{Log}}
\newcommand{\calX}{\mathcal{X}}
\newcommand{\calU}{\mathcal{U}}
\newcommand{\calN}{\mathcal{N}}
\newcommand{\calM}{\mathcal{M}}
\newcommand{\calG}{\mathcal{G}}
\newcommand{\calV}{\mathcal{V}}
\newcommand{\calE}{\mathcal{E}}
\newcommand{\calD}{\mathcal{D}}
\newcommand{\col}{\mathrm{col}}

\newcommand{\data}{\mathrm{data}}
\newcommand{\capb}{\mathrm{cap}}
\definecolor{revisionblue}{RGB}{0,0,0}
\newenvironment{revision}{\begingroup\color{revisionblue}}{\endgroup}
\newenvironment{retained}{\begingroup\color{black}}{\endgroup}

\begin{document}

\begin{frontmatter}

\title{Certificate-Carrying Distributed Model Predictive Control on Product Manifolds with $\SO$\tnoteref{t1}}
\tnotetext[t1]{This work was partially supported by the National Natural Science
Foundation of China under Grant 62133003 and the National Social Science Fund of China
under Grant 24CTJ010.}

\author[hubu]{Shengjun Zhang}
\ead{sj.zhang@hubu.edu.cn}
\author[whu]{Tingyi Liu}
\ead{2019101050047@whu.edu.cn}
\author[kth]{Lei Xu}
\ead{lei5@kth.se}
\author[neu]{Tao Yang\corref{cor1}}
\ead{yangtao@mail.neu.edu.cn}
\cortext[cor1]{Corresponding author.}

\address[hubu]{School of Artificial Intelligence, Hubei University, Wuhan 430061, China}
\address[whu]{School of Economics and Management, Wuhan University, Wuhan 430072, China}
\address[kth]{Division of Decision and Control Systems, KTH Royal Institute of
Technology, and Digital Futures, Stockholm, Sweden}
\address[neu]{State Key Laboratory of Synthetical Automation for Process Industries,
Northeastern University, Shenyang 110819, China}

\begin{abstract}
\color{revisionblue}
This paper studies constraint certification in synchronous distributed model predictive
control (DMPC) when neighboring predictions change between sampling instants. Before
the parallel local solves, each agent communicates a shifted prediction and an announced
update budget. A hard trajectory trust region makes that budget enforceable, while an
edge-wise feasibility cap computed from the shifted packets keeps the fallback feasible
without using any current optimizer output. Distance and relative-attitude constraints
are tightened with explicit Lipschitz constants and two budget layers: one accounts for
the simultaneous neighbor update and the other retains a checkable shift reserve. We
prove hard pairwise constraint satisfaction and recursive feasibility under stated
nominal-execution and terminal assumptions, give the additional residual caused by
execution error, and derive a local practical value-decrease bound. A spacecraft
formation example uses hard terminal and pairwise constraints, a geodesic relative-
attitude constraint on $\SO$, and reproducible terminal-set checks. Comparisons with
fixed, trajectory-only, and windowed online margins show that the proposed budget
reduces conservatism while preserving a positive shifted-feasibility margin.
\end{abstract}

\begin{keyword}
\color{revisionblue}
Distributed model predictive control \sep constraint tightening \sep recursive
feasibility \sep rigid-body attitude \sep $\SO$
\end{keyword}

\end{frontmatter}

\section{Introduction}

Distributed model predictive control (DMPC) coordinates networked systems by solving
local finite-horizon problems that depend on communicated neighbor predictions
\cite{Dunbar2007,Keviczky2006,Scattolini2009,Christofides2013,FarinaScattolini2012}.
When all agents update in parallel, however, the prediction used as a parameter by one
agent generally differs from the trajectory eventually accepted by its neighbor. A
pairwise constraint that is feasible against the communicated parameter need not remain
feasible after both agents update.

As in centralized stabilizing MPC, recursive feasibility also depends on a shifted input
sequence, a terminal invariant set, and a terminal decrease condition
\cite{ChenAllgower1998,Mayne2000,Rawlings2017,GrunePannek2017}. In a distributed problem,
however, the shifted own sequence is evaluated against neighbor parameters that may be
changed by simultaneous optimizations. Terminal ingredients alone do not resolve that
additional mismatch. The construction below therefore separates the familiar local
terminal argument from a new edge-reserve argument for the communicated trajectories.

\begin{revision}
Fixed robust tightening prescribes an offline uncertainty radius
\cite{RichardsHow2007,MayneSeronRakovic2005}. Related adaptive constructions address
different objects. Giselsson--Rantzer tighten a convex dual-decomposition problem as a
function of finite inner iterations; K\"ohler--M\"uller--Allg\"ower assume bounded
suboptimality and constraint violation from inexact dual optimization
\cite{GiselssonRantzer2014,KohlerMullerAllgower2019}. The Trodden formulations concern
linear systems and propagate tube or mutual-disturbance sets, in the latter case
exchanging constraint sets rather than planned trajectories
\cite{Trodden2014,TroddenMaestre2017}.

Here $\rho_i$ instead bounds the coupling-output displacement of an accepted nonlinear
trajectory from a shift already stored by neighbors. It is available before the solve,
enforced as a hard trajectory constraint, and verified afterward. Directly transplanting
the cited algorithms would change the plant class, distributed optimization loop,
communicated object, and source of the feasibility margin simultaneously. The five-sample
windowed baseline in Section~5 therefore keeps the same nonlinear $\SO$ problem and the
same measured displacement, changing only the online update rule. This is the closest
controlled comparison for the mechanism studied here, not a claimed ranking over tube
or inexact-dual DMPC.

The timing is causal only if the current optimizer is absent from every current problem
parameter. Each sample therefore uses two pre-solve communication rounds and one
post-solve round. Agents first exchange committed shifts, then compute and announce the
admissible budgets used by the parallel NLPs. After the solves, they exchange the
accepted trajectories and measured updates used at the next sample. Neighbors store the
two trajectories and recompute $\mu_i(t)$ rather than trusting a reported scalar.
\end{revision}

Rigid-body networks motivate the setting because their states naturally contain
$\SO$ factors. Geometric MPC avoids singular attitude coordinates and preserves the
group constraint \cite{BulloLewis,Lee2010,Kalabic2017}. The budget mechanism itself is
metric rather than specific to a Lie group. In this paper, the geometry enters the
dynamics, the coupling metric, the terminal coordinates, and a genuinely manifold-
dependent relative-attitude constraint. All stability statements are local, consistently
with the topology of $\SO$ \cite{BhatBernstein2000}.

\begin{revision}
The contributions are as follows.
\begin{enumerate}[label=(\roman*)]
\item We give a synchronous, noncircular DMPC protocol with separately timed pre- and
post-solve packets. A scalar feasibility cap, computed from the committed shifted
packets before optimization, converts the online reserve test into an explicit closed-
loop mechanism.
\item We derive edge-wise hard-constraint and recursive-feasibility guarantees for the
resulting two-layer tightening. Exact nominal execution is stated explicitly; for
model, estimation, or sampling error we give the additional residual and the margin
needed to recover a hard guarantee. The terminal, trust, and safety constraints in the
certified formulation are hard, so the proof does not rely on unbounded or penalized
slacks.
\item We instantiate the method for spacecraft position-attitude dynamics with an
intrinsic relative-attitude constraint. The numerical study reports explicit
Lipschitz constants, a windowed online-tightening baseline, a boundary challenge with a
forced solver rejection, five fixed-seed runs, a six-agent graph, an explicit numerical
dissipation decomposition, and sampled tests of the projected terminal law.
\end{enumerate}

We do not claim that a Lipschitz radius gives a globally smallest physical margin.
Directional, stage-dependent, gradient, or set-valued information can be less
conservative. The contribution is the enforceable and verifiable use of a
trajectory-specific scalar budget, together with the feasibility-cap mechanism.
\end{revision}

\section{Setting and Explicit Margins}

\subsection{Notation and local product coordinates}

The special orthogonal group and its tangent algebra are
\begin{align}
\SO&=\{R\in\R^{3\times3}:R^\top R=I,\ \det R=1\},\nonumber\\
\mathfrak{so}(3)&=\{A\in\R^{3\times3}:A^\top=-A\}.
\label{eq:so3_definition}
\end{align}
For $a\in\R^3$, the hat map satisfies $a^\wedge b=a\times b$, and
$({\cdot})^\vee$ denotes its inverse. We write
$\Exp(a)=\exp(a^\wedge)$ and use $\Log(R)^\vee$ for the principal local logarithm.
All logarithms below are evaluated on a geodesic ball of radius strictly smaller than
$\pi$, where this coordinate is single valued. The Euclidean norm is $\|\cdot\|$;
vector inequalities are componentwise.

A representative agent state is
\begin{equation}
x_i=(p_i,\xi_i,R_i,\omega_i)
\in\R^{n_p}\times\R^{n_\xi}\times\SO\times\R^3,
\label{eq:product_state}
\end{equation}
where $p_i$ contains the position variables that enter pairwise constraints, $\xi_i$
collects remaining Euclidean states, and $\omega_i$ is expressed in the body frame.
Multiple rotation factors are covered by taking products of the corresponding local
coordinates. Given a desired equilibrium
$x_{i,d}=(p_{i,d},\xi_{i,d},R_{i,d},\omega_{i,d})$, define
\begin{equation}
e_i(x_i)=\col\!\left(p_i-p_{i,d},\ \xi_i-\xi_{i,d},\
\Log(R_{i,d}^\top R_i)^\vee,\ \omega_i-\omega_{i,d}\right).
\label{eq:local_error}
\end{equation}
These coordinates are used only for local costs and terminal ingredients; rotations are
propagated intrinsically and are never optimized as unconstrained attitude vectors.

Two distances are needed below. Cost perturbations and terminal propagation use the
full-state trajectory metric
\begin{align}
d_i(x_i,y_i)&=\|p_i-q_i\|+\|\xi_i-\zeta_i\|+
d_R(R_i,S_i)+\|\omega_i-\varpi_i\|,\nonumber\\
d_{H,i}(X_i,Y_i)&=\max_{0\leq k\leq H}d_i(x_i(k),y_i(k)).
\label{eq:full_metric}
\end{align}
Here $y_i=(q_i,\zeta_i,S_i,\varpi_i)$. Unit component weights are used in the
reported full-state diagnostic; any fixed positive weights give an equivalent local
metric and only rescale the regularity constants. The constraint certificate uses the
smaller coupling-output distance below, so velocity changes that do not enter an edge
constraint do not enlarge its tightening.

\subsection{Product-manifold dynamics}

The communication graph is $\calG=(\calV,\calE)$, where
$\calV=\{1,\ldots,N\}$, $\calN_i$ denotes the neighbors of agent $i$, and
$\calN_i^+=\calN_i\cup\{i\}$. Agent $i$
has state $x_i\in\calM_i$, input $u_i\in\calU_i$, and sampled dynamics
\begin{equation}
x_i^+=f_i(x_i,u_i), \qquad
\calM_i=\R^{n_i}\times\SO^{m_i}.
\label{eq:dynamics}
\end{equation}
The map $f_i$ is assumed continuous and locally smooth on the operating set. The
network state belongs to $\calM=\prod_i\calM_i$.

For $R,S\in\SO$, define the geodesic distance
\begin{equation}
d_R(R,S)=\|\Log(R^\top S)^\vee\|,
\label{eq:rotation_metric}
\end{equation}
on a ball with radius strictly below $\pi$. Here $a^\wedge b=a\times b$,
$\Exp(a):=\exp(a^\wedge)$, and $\Log(R)^\vee$ is the inverse local coordinate. For a
position-attitude coupling output, define
\begin{equation}
d_i^c(x_i,y_i)=
\max\{\|p_i-q_i\|,\,\ell_R d_R(R_i,S_i)\},
\label{eq:coupling_metric}
\end{equation}
where $\ell_R>0$ converts radians to the position unit. Velocity components can be
added if a coupling constraint uses them. Equation~\eqref{eq:coupling_metric} is a
metric on the coupling-output space and a pseudometric on the full state. For
trajectories of length $H+1$,
\begin{equation}
d_{H,i}^c(X_i,Y_i)=\max_{0\leq k\leq H}d_i^c(x_i(k),y_i(k)).
\label{eq:trajectory_metric}
\end{equation}

For every directed edge $(i,j)$ and constraint index $q$, hard coupling constraints
have the form
\begin{equation}
g_{ij}^q(x_i,x_j)\leq 0.
\label{eq:coupling_constraint}
\end{equation}
On a compact local set $\calD$, assume
\begin{equation}
|g_{ij}^q(x_i,y_j)-g_{ij}^q(x_i,z_j)|
\leq L_{ij}^q d_j^c(y_j,z_j).
\label{eq:lipschitz_constraint}
\end{equation}
Only the neighbor argument is perturbed in \eqref{eq:lipschitz_constraint}; the own
prediction remains a decision variable in the local problem.

\textcolor{revisionblue}{Packets are exchanged in both directions on each constrained
physical edge.} The directed notation records constraint ownership: agent $i$ optimizes
its state against the committed packet of $j$, so $L_{ij}^q$ measures sensitivity only
to the second argument.
If the same physical edge is represented in both local problems, the two directed
constraints and their constants are checked separately. No convexity of $g_{ij}^q$ is
required for the certificate implication; local Lipschitz continuity on the declared
compact set is sufficient.

\subsection{Constants used for distance and attitude constraints}

\begin{revision}
The simulations use distance, rather than squared-distance, constraints:
\begin{align}
g_{ij}^{\rm col}(p_i,p_j)&=d_{\rm safe}-\|p_i-p_j\|,\label{eq:gcol}\\
g_{ij}^{\rm com}(p_i,p_j)&=\|p_i-p_j\|-d_{\rm comm}.
\label{eq:gcom}
\end{align}
To expose the constant rather than merely state it, put
$a=p_i-p_j$ and $b=p_i-q_j$. The triangle inequality applied in both directions gives
\begin{align}
\|a\|&\leq\|b\|+\|p_j-q_j\|,\nonumber\\
\|b\|&\leq\|a\|+\|p_j-q_j\|,\nonumber\\
\big|\|p_i-p_j\|-\|p_i-q_j\|\big|
&\leq\|p_j-q_j\|.
\label{eq:distance_reverse_triangle}
\end{align}
The additive constants and opposite signs in \eqref{eq:gcol}--\eqref{eq:gcom} disappear
after taking absolute values. Since
$\|p_j-q_j\|\leq d_j^c(x_j,y_j)$, \eqref{eq:distance_reverse_triangle} proves
$L_{ij}^{\rm col}=L_{ij}^{\rm com}=1$. This removes the invalid small constant that
would result from scaling a fixed margin for a squared-distance constraint.

Let $R_{ij}^{d}$ be the desired relative attitude and define
\begin{equation}
g_{ij}^{R}(R_i,R_j)
=d_R(R_i^\top R_j,R_{ij}^{d})-\theta_{\max}.
\label{eq:gatt}
\end{equation}
Bi-invariance and the reverse triangle inequality imply
\begin{align}
&|g_{ij}^{R}(R_i,R_j)-g_{ij}^{R}(R_i,S_j)|\nonumber\\
&\quad\leq d_R(R_i^\top R_j,R_i^\top S_j)
=d_R(R_j,S_j)\nonumber\\
&\quad\leq\ell_R^{-1}d_j^c(x_j,y_j).
\label{eq:attitude_lipschitz_derivation}
\end{align}
hence $L_{ij}^{R}=1/\ell_R$. These constants are analytic and do not depend on a
tuned robust margin.

For later reference, if both arguments change, two applications of the same inequality
give
\begin{equation}
g_{ij}^q(y_i,y_j)\leq g_{ij}^q(x_i,x_j)
+L_{ij}^{q,i}d_i^c(y_i,x_i)+L_{ij}^q d_j^c(y_j,x_j),
\label{eq:two_argument_lipschitz}
\end{equation}
where $L_{ij}^{q,i}$ is the first-argument constant. The nominal synchronous theorem
needs only $L_{ij}^q$ because each local problem keeps its own trajectory as a decision
variable. Equation~\eqref{eq:two_argument_lipschitz} is used when actual execution errors
of both agents are included.
\end{revision}

\section{Prediction-Budget DMPC}

\subsection{Shifted, optimal, and accepted predictions}

At time $t$, $\bar X_i(t)=\{\bar x_i(k|t)\}_{k=0}^{H}$ denotes the committed
shifted trajectory stored by agent $i$ and its neighbors. If $\widehat X_i(t-1)$ was
accepted at the preceding step, then
\begin{align}
\bar x_i(k|t)&=\widehat x_i(k+1|t-1), &&0\leq k<H,\label{eq:shift1}\\
\bar x_i(H|t)&=f_i(\widehat x_i(H|t-1),
\kappa_i(\widehat x_i(H|t-1))).
\label{eq:shift2}
\end{align}
Under exact nominal execution, $\bar x_i(0|t)=x_i(t)$. The local optimizer is denoted
$X_i^\star(t)$. The accepted trajectory $\widehat X_i(t)$ equals the optimizer only if
all hard numerical checks pass; otherwise it equals the feasible shifted fallback.

\begin{revision}
At $t=0$, the committed trajectories are initialized by a jointly admissible warm start
whose pairwise reserves are nonnegative. This is the usual initial-feasibility requirement
in stabilizing MPC, but here it is directly testable before any distributed solve. At
later instants, the induction in Theorem~\ref{thm:recursive} constructs the warm start.
Three trajectories are therefore kept distinct throughout: $\bar X_i$ is known before
the solve, $X_i^\star$ is returned by the numerical optimizer, and $\widehat X_i$ is the
candidate actually accepted after independent hard-constraint checks. Conflating these
objects would recreate the same-step circularity that the packet timing is intended to
remove.
Table~\ref{tab:notation} summarizes the timing: its first three entries are known or
formed before the local solve, whereas the last two result from the solve and the
subsequent acceptance checks.
\end{revision}

\begin{table*}[t]
\color{revisionblue}
\centering
\caption{Timing and roles of prediction and packet quantities.}
\label{tab:notation}
\begin{tabular}{lll}
\toprule
Symbol & Meaning & Stage in sample $t$ \\
\midrule
$\bar X_i(t)$ & committed shifted prediction in \eqref{eq:shift1}--\eqref{eq:shift2}
 & known before local OCP \\
$\rho_i^{\data}(t)$ & update budget requested from past measured updates
 & computed before local OCP \\
$\rho_i(t)$ & announced budget after the feasibility cap
 & fixed before local OCP \\
$X_i^\star(t)$ & local optimizer & returned by local OCP solver \\
$\widehat X_i(t),\mu_i(t),\chi_i(t)$ & accepted trajectory, coupling update, and
full-state update
 & formed after acceptance checks \\
\bottomrule
\end{tabular}
\end{table*}

\subsection{Data request and feasibility cap}

\begin{revision}
For $t\geq1$, the previous measured update affects only the next request. One simple
rule is
\begin{equation}
\rho_i^{\data}(t)=\min\{\rho_i(t-1),
\max\{\rho_{\min},\mu_i(t-1)+\delta_\rho\}\},
\label{eq:data_budget}
\end{equation}
where $\delta_\rho\geq0$. A windowed baseline replaces $\mu_i(t-1)$ by the maximum
of the last $w$ measured updates.

Before optimization, all agents first exchange $\bar X_i(t)$. Using the stored
neighbor shifts and the public edge functions and constants, each agent then evaluates
the incident edge-stage caps needed for its own budget. For every directed edge, stage,
and constraint, define the available shifted-packet reserve
\begin{equation}
r_{ij,k}^q(t)=
-g_{ij}^q(\bar x_i(k|t),\bar x_j(k|t))-\epsilon_{ij}^q,
\label{eq:raw_reserve}
\end{equation}
where $\epsilon_{ij}^q>0$ covers the numerical feasibility tolerance. For
$L_{ij}^q>0$, using the committed shift itself in the two-layer constraint is
equivalent to
\begin{align}
&g_{ij}^q(\bar x_i(k|t),\bar x_j(k|t))
+2L_{ij}^q\rho_j(t)+\epsilon_{ij}^q\leq0\nonumber\\
&\quad\Longleftrightarrow\quad
0\leq\rho_j(t)\leq\frac{r_{ij,k}^q(t)}{2L_{ij}^q}.
\label{eq:cap_derivation}
\end{align}
Thus each edge-stage inequality supplies an explicit admissible interval for the same
neighbor budget. Intersecting all of these intervals gives the scalar cap
\begin{equation}
\rho_j^{\capb}(t)=
\min_{\substack{i:j\in\calN_i,\,1\leq k\leq H\\q:L_{ij}^q>0}}
\frac{r_{ij,k}^q(t)}{2L_{ij}^q}.
\label{eq:budget_cap}
\end{equation}
The convention is $\min\varnothing=+\infty$ when agent $j$ enters no constraint with a
positive neighbor-argument constant.
If any reserve in \eqref{eq:raw_reserve} is negative, the pre-solve feasibility test
fails. Under the assumptions of Theorem~\ref{thm:recursive}, this does not occur.
The announced budget is
\begin{equation}
\rho_j(t)=\max\{0,\min\{\rho_j^{\data}(t),
\rho_j^{\capb}(t)\}\}.
\label{eq:announced_budget}
\end{equation}
A positive floor is a performance preference and is applied only when permitted by the
cap. Thus feasibility takes precedence if the available reserve becomes small.

The logical pre-solve certificate record is
\begin{equation}
\mathcal{C}_i^-(t)=(t,\bar X_i(t),\rho_i(t)).
\label{eq:prepacket}
\end{equation}
It is assembled after the two pre-solve rounds and contains no quantity generated by
the time-$t$ optimizer. The first round transmits $(t,\bar X_i(t))$; because that
trajectory is already stored, the second round transmits only $(t,\rho_i(t))$ rather
than sending $\bar X_i(t)$ again.
\end{revision}

\subsection{Hard local problem}

Given the frozen pre-solve packets, all local problems are solved in parallel. Agent $i$
uses the local error \eqref{eq:local_error}. Representative relative errors evaluated
against the committed packet of $j$ are
\begin{align}
e_{ij}^{p}(k)&=(\bar p_j(k)-p_i(k))-(p_{j,d}-p_{i,d}),\nonumber\\
e_{ij}^{R}(k)&=\Log\!\left((R_{ij}^{d})^\top
R_i(k)^\top\bar R_j(k)\right)^\vee .
\label{eq:relative_errors}
\end{align}
The stage cost may then be written as
\begin{align}
\ell_i(k)={}&\|e_i(x_i(k))\|_{Q_i}^2+
\|u_i(k)-u_{i,d}\|_{R_i^u}^2 \nonumber\\
&+\sum_{j\in\calN_i}\left(\|e_{ij}^{p}(k)\|_{Q_{ij}^{p}}^2+
\|e_{ij}^{R}(k)\|_{Q_{ij}^{R}}^2\right)\nonumber\\
&+\|u_i(k)-u_i(k-1)\|_{S_i}^2+
\|e_i^{\rm sh}(k)\|_{W_i}^2,
\label{eq:stage_cost}
\end{align}
where $e_i^{\rm sh}$ contains the position and logarithmic-attitude differences from
$\bar x_i(k|t)$. Terms not needed by a particular application can be omitted. Positive
definite self-state and input weights give the local stage-cost lower bound used later;
the relative and shift terms improve coordination but are not substituted for that
condition. In particular, we write
\[
\ell_i^0(x_i,u_i):=\|e_i(x_i)\|_{Q_i}^2+
\|u_i-u_{i,d}\|_{R_i^u}^2
\]
for the packet-independent positive-definite part of the stage cost. At $k=0$,
$u_i(-1)$ denotes the input applied at the preceding sampling instant.

For terminal ingredients, let
\begin{equation}
V_{f,i}(x_i)=e_i(x_i)^\top P_i e_i(x_i),\qquad
\calX_{f,i}=\{x_i:V_{f,i}(x_i)\leq c_i\},
\label{eq:terminal_set}
\end{equation}
and let $\kappa_i$ be a hard-admissible local terminal controller. In the numerical
study it is the equilibrium input plus an LQR correction, projected componentwise onto
$\calU_i$. This is the standard local terminal-set construction used in stabilizing MPC
\cite{Mayne2000,Rawlings2017}, with the invariance and decrease conditions stated
explicitly in Assumption~\ref{ass:terminal}. The local finite-horizon objective is
\begin{equation}
J_i=\sum_{k=0}^{H-1}\ell_i(k)
+V_{f,i}(x_i(H))
\label{eq:cost}
\end{equation}
\begin{revision}
and the complete certified problem is
\begin{align}
\mathcal P_i(t):\quad\min_{U_i}\quad&J_i \nonumber\\
\mathrm{s.t.}\quad
x_i(0|t)&=x_i(t),\nonumber\\
x_i(k+1|t)&=f_i(x_i(k|t),u_i(k|t)),\nonumber\\
x_i(k|t)&\in\calX_i\cap\calD_i,\quad u_i(k|t)\in\calU_i,
\label{eq:local_problem}
\end{align}
together with the following hard trust, coupling, and terminal constraints:
\begin{align}
d_{H,i}^c(X_i,\bar X_i(t))&\leq\rho_i(t),
\label{eq:trust}\\
g_{ij}^q(x_i(k),\bar x_j(k|t))&\leq-\eta_{ij}^q(t),
\label{eq:tightened}\\
\eta_{ij}^q(t)&=2L_{ij}^q\rho_j(t)+\epsilon_{ij}^q,
\label{eq:two_layer}\\
V_{f,i}(x_i(H))&\leq c_i.
\label{eq:hard_terminal}
\end{align}
Constraint \eqref{eq:tightened} is imposed for $j\in\calN_i$, all $q$, and
$k=1,\ldots,H$; the local state and input conditions in
\eqref{eq:local_problem} use their natural stage ranges. The first layer
$L_{ij}^q\rho_j$ in \eqref{eq:two_layer} covers the simultaneous update of neighbor
$j$. The second, identical layer remains as reserve in the accepted trajectory and
allows the next shifted packet to be tested by the pre-solve cap. The small $\epsilon_{ij}^q$ is a
declared numerical tolerance, not a replacement for either layer.

Pairwise, trust, and terminal slacks are fixed to zero in the certified mode. If an
application instead enables a soft pairwise slack $s_{ij}^q$, the corresponding
conclusion is only $g_{ij}^q\leq s_{ij}^q$ and lies outside the hard-safety theorem.
This distinction is operational: a feasible soft NLP is not reported as a certified
hard-feasible solve. Likewise, an optimizer termination flag alone is insufficient;
the candidate is rolled out again and every displayed hard inequality is evaluated to
the declared acceptance tolerance.
\end{revision}

\begin{revision}
After optimization, let $\mathcal A_i(t)$ be the finite set containing the committed
hard-feasible fallback $U_i^{\rm fb}(t)$ and every independently rolled-out optimizer
candidate that passes all hard checks. The accepted candidate and its realized feasible
cost are
\begin{align}
\widehat U_i(t)&\in\arg\min_{U_i\in\mathcal A_i(t)}
J_i(U_i;x_i(t),\bar X_{\calN_i^+}(t)),
\label{eq:acceptance_rule}\\
\widehat J_i(t)&:=J_i(\widehat U_i(t);x_i(t),
\bar X_{\calN_i^+}(t))
\leq J_i(U_i^{\rm fb}(t);x_i(t),\bar X_{\calN_i^+}(t)).
\label{eq:accepted_value}
\end{align}
This finite comparison does not assume that the numerical NLP output is a global
minimizer. Agent $i$ then records the coupling-output and full-state displacements
\begin{equation}
\mu_i(t)=d_{H,i}^c(\widehat X_i(t),\bar X_i(t))\leq\rho_i(t),
\qquad
\chi_i(t)=d_{H,i}(\widehat X_i(t),\bar X_i(t)).
\label{eq:measured_update}
\end{equation}
and sends the post-solve packet
\begin{equation}
\mathcal{C}_i^+(t)=(t,\widehat X_i(t),\mu_i(t)).
\label{eq:postpacket}
\end{equation}
A neighbor independently verifies both distances in \eqref{eq:measured_update} from the
newly received trajectory and the stored $\bar X_i(t)$. Only $\mu_i$ enters the
constraint budget; $\chi_i$ is used in the value-perturbation analysis and need not be
sent as an additional scalar. The nominal protocol assumes that this synchronized
exchange succeeds. A failed or missing packet is rejected; the stored shift alone is
not treated as a certificate of the sender's realized update.

The two packets have complementary roles. The pre-solve packet is a commitment: its
budget is already feasible and is enforced in \eqref{eq:trust}. The post-solve packet is
a verification record: it cannot change the problem just solved, but it allows every recipient
to recompute the realized update and form the next request. A packet is accepted only if
its time stamp, trajectory length, manifold consistency, and inequality
\eqref{eq:measured_update} all pass. The scalar itself is therefore not trusted without
the trajectory data needed to verify it.
Algorithm~\ref{alg:protocol} displays the three communication barriers explicitly.
\end{revision}

\begin{algorithm}[t]
\color{revisionblue}
\caption{Synchronous prediction-budget DMPC.}
\label{alg:protocol}
\color{revisionblue}\footnotesize
\begin{algorithmic}[1]
\Require $x_i(t)$, $\widehat X_i(t-1)$, and $\mu_i(t-1)$ for all $i\in\calV$
\Ensure $u_i(t)$ and verified packets $\mathcal C_i^-(t),\mathcal C_i^+(t)$
\ForAll{$i\in\calV$ \textbf{in parallel}}
  \State $\bar X_i(t)\gets S_i\widehat X_i(t-1)$; broadcast $\bar X_i(t)$
\EndFor
\State \textbf{synchronize}; store all neighboring shifts
\ForAll{$j\in\calV$ \textbf{in parallel}}
  \State evaluate incident caps from \eqref{eq:raw_reserve}; set $\rho_j^{\capb}(t)$ to their minimum
  \State project $\rho_j(t)$ by \eqref{eq:announced_budget}; broadcast $(t,\rho_j(t))$
\EndFor
\State \textbf{synchronize}; assemble and freeze all $\mathcal C_j^-(t)$
\ForAll{$i\in\calV$ \textbf{in parallel}}
  \State solve $\mathcal P_i(t)$, form $\mathcal A_i(t)$, and select $\widehat U_i(t)\in\arg\min_{U_i\in\mathcal A_i(t)}J_i(U_i)$
  \State compute $\mu_i(t),\chi_i(t)$ by \eqref{eq:measured_update}; broadcast $\mathcal C_i^+(t)$
\EndFor
\State \textbf{synchronize}; verify each $\mathcal C_i^+(t)$
\ForAll{$i\in\calV$ \textbf{in parallel}}
  \State apply $u_i(t)=\widehat u_i(0|t)$ and store $S_i\widehat X_i(t)$
\EndFor
\end{algorithmic}
\end{algorithm}

\section{Main Results}
\begingroup\small

\begin{revision}
\subsection{Assumptions}

\begin{assumption}
For each agent there is a compact, locally geodesically convex set
$\calD_i\subset\calM_i$, contained in an $\SO$ chart of radius strictly below $\pi$.
On $\calD=\prod_i\calD_i$, the dynamics and terminal controllers are locally Lipschitz,
the packet-dependent costs are continuously differentiable, and
\eqref{eq:lipschitz_constraint} holds. Every local problem imposes
$x_i(k|t)\in\calD_i$ as a hard prediction constraint, and the initialized MPC problem
is feasible.
\label{ass:regularity}
\end{assumption}

\begin{assumption}
During the certified nominal operation, the applied input, sampled model, state estimate,
and packet timing are exact, so
$x_i(t+1)=\widehat x_i(1|t)$. The nonideal case is treated explicitly in
Corollary~\ref{cor:nonideal}.
\label{ass:execution}
\end{assumption}

\begin{assumption}
For $\calX_{f,i}=\{x_i:V_{f,i}(x_i)\leq c_i\}\subset\calX_i\cap\calD_i$, the projected terminal controller
$\kappa_i$ satisfies the hard input and state constraints,
$f_i(x_i,\kappa_i(x_i))\in\calX_{f,i}$, and
\begin{equation}
V_{f,i}(f_i(x_i,\kappa_i(x_i)))-V_{f,i}(x_i)
\leq-\ell_i^0(x_i,\kappa_i(x_i))
\label{eq:terminal_decrease}
\end{equation}
for $x_i\in\calX_{f,i}$. For every $(x_1,\ldots,x_N)$ in the product terminal set and
every edge,
\[
g_{ij}^q(f_i(x_i,\kappa_i(x_i)),f_j(x_j,\kappa_j(x_j)))
\leq-\epsilon_{ij}^q.
\]
\label{ass:terminal}
\end{assumption}

These are local assumptions, not global stabilization claims. In particular,
Assumption~\ref{ass:regularity} specifies the region in which the logarithm and all
constants are valid. Section~\ref{sec:terminal_verification} gives reproducible sampled
evidence for Assumption~\ref{ass:terminal} in the selected terminal set.

\begin{lemma}
\label{lem:cap}
Suppose the committed shifted packets have nonnegative reserves
$r_{ij,k}^q(t)\geq0$ in \eqref{eq:raw_reserve}.
Then the budget selected by \eqref{eq:budget_cap}--\eqref{eq:announced_budget} obeys
\begin{equation}
g_{ij}^q(\bar x_i(k|t),\bar x_j(k|t))
\leq-2L_{ij}^q\rho_j(t)-\epsilon_{ij}^q.
\label{eq:cap_implication}
\end{equation}
Consequently, the committed own trajectory has zero trust distance and satisfies every
tightened edge constraint of $\mathcal P_i(t)$.
\end{lemma}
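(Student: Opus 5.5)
The argument is a direct unpacking of the cap construction, organized edge by edge. Fix a directed edge $(i,j)$ with $j\in\calN_i$, a constraint index $q$, and a stage $1\leq k\leq H$. We treat two cases according to whether the neighbor-argument constant $L_{ij}^q$ vanishes.

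\emph{Case $L_{ij}^q=0$.} The claimed bound \eqref{eq:cap_implication} reduces to $g_{ij}^q(\bar x_i(k|t),\bar x_j(k|t))\leq-\epsilon_{ij}^q$. This is exactly $r_{ij,k}^q(t)\geq0$ by \eqref{eq:raw_reserve}, so the hypothesis gives it immediately.

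\emph{Case $L_{ij}^q>0$.} This pair is one of the index triples over which the minimum in \eqref{eq:budget_cap} is taken. Hence
\[
\rho_j^{\capb}(t)\leq \frac{r_{ij,k}^q(t)}{2L_{ij}^q}.
\]
Under the hypothesis this upper bound is finite and nonnegative. The announced budget \eqref{eq:announced_budget} then satisfies
\[
0\leq\rho_j(t)\leq\min\{\rho_j^{\data}(t),\rho_j^{\capb}(t)\}\leq\rho_j^{\capb}(t).
\]
When the inner minimum is negative, the outer $\max\{0,\cdot\}$ returns $0$. This still lies in $[0,\rho_j^{\capb}(t)]$ because $\rho_j^{\capb}(t)\geq0$. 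It follows that $\rho_j(t)\in[0,r_{ij,k}^q(t)/(2L_{ij}^q)]$. Applying the equivalence \eqref{eq:cap_derivation}, which is just multiplying by $2L_{ij}^q>0$ and substituting \eqref{eq:raw_reserve}, yields \eqref{eq:cap_implication}.

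The one step that needs care is this placement of $\rho_j(t)$ in the interval when the outer maximum returns $0$. It is also where the hypothesis $r\geq0$ is genuinely used in the positive-constant case: the nonnegativity of the reserve is what makes $0$ an admissible value. If $\rho_j^{\data}(t)$ were infinite or the minimum set were empty, the cap would still bound $\rho_j(t)$, because the edge $(i,j)$ itself belongs to the index set.

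\emph{Consequence for $\mathcal P_i(t)$.} Take the candidate $X_i=\bar X_i(t)$, the committed shift itself. Its trust distance $d_{H,i}^c(\bar X_i,\bar X_i)$ is $0$, since the trajectory pseudometric vanishes on the diagonal. Because $\rho_i(t)\geq0$ by the outer maximum, \eqref{eq:trust} holds. Substituting $x_i(k)=\bar x_i(k|t)$ into \eqref{eq:tightened} with \eqref{eq:two_layer} gives exactly \eqref{eq:cap_implication} for every $j\in\calN_i$, every $q$, and every $k=1,\ldots,H$. These are precisely the stage and index ranges covered in the first part, so every tightened edge constraint of $\mathcal P_i(t)$ is satisfied.

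The remaining constraints of $\mathcal P_i(t)$ — initial condition, dynamics, state and input sets, and terminal set — are not claimed by the lemma and are left to Theorem~\ref{thm:recursive}.
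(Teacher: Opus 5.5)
Your proposal is correct and follows the paper's proof essentially step for step: the same case split on whether $L_{ij}^q$ vanishes, the same chain $0\le\rho_j(t)\le\rho_j^{\capb}(t)\le r_{ij,k}^q(t)/(2L_{ij}^q)$ followed by rearrangement, and the same choice $X_i=\bar X_i(t)$ with zero trust distance and $\rho_i(t)\geq0$. Your explicit restriction to $1\le k\le H$, the index range of both the cap and \eqref{eq:tightened}, is slightly more careful than the paper's unqualified ``fix $(i,j,k,q)$''; note, however, that your displayed bound $\rho_j(t)\le\min\{\rho_j^{\data}(t),\rho_j^{\capb}(t)\}$ holds only when that minimum is nonnegative, which your next sentence correctly handles.
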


\begin{proof}
Fix $(i,j,k,q)$. If $L_{ij}^q>0$, then
\begin{align*}
0\leq\rho_j(t)
&\leq\rho_j^{\capb}(t)
\leq\frac{r_{ij,k}^q(t)}{2L_{ij}^q},\\
2L_{ij}^q\rho_j(t)
&\leq-g_{ij}^q(\bar x_i(k|t),\bar x_j(k|t))
-\epsilon_{ij}^q,
\end{align*}
where the first line follows from \eqref{eq:budget_cap}--\eqref{eq:announced_budget}
and the second from \eqref{eq:raw_reserve}. Rearrangement gives
\eqref{eq:cap_implication}. If $L_{ij}^q=0$, the hypothesis directly gives the same
inequality because its budget term vanishes. Taking $X_i=\bar X_i(t)$ therefore
satisfies \eqref{eq:tightened}, while
$d_{H,i}^c(\bar X_i(t),\bar X_i(t))=0\leq\rho_i(t)$ satisfies
\eqref{eq:trust}.
\end{proof}

\begin{proposition}
\label{prop:budget_projection}
For fixed committed packets, let
\begin{equation}
\mathcal B_j(t):=[0,\infty)\cap
\bigcap_{\substack{i:j\in\calN_i,\,1\leq k\leq H\\
q:L_{ij}^q>0}}
\left[0,\frac{r_{ij,k}^q(t)}{2L_{ij}^q}\right].
\label{eq:budget_feasible_interval}
\end{equation}
If all raw reserves are nonnegative, then
$\mathcal B_j(t)=[0,\rho_j^{\capb}(t)]$, and the announced budget is the unique solution of
\begin{equation}
\rho_j(t)=\arg\min_{\rho\in\mathcal B_j(t)}
\frac{1}{2}\bigl(\rho-\rho_j^{\data}(t)\bigr)^2.
\label{eq:budget_projection_program}
\end{equation}
For $\rho_j^{\data}(t)\geq0$, the intervention caused by certification is exactly
\begin{equation}
\rho_j^{\data}(t)-\rho_j(t)
=\bigl[\rho_j^{\data}(t)-\rho_j^{\capb}(t)\bigr]_+.
\label{eq:budget_projection_loss}
\end{equation}
Here $[a-\infty]_+=0$ for an unconstrained budget direction.
\end{proposition}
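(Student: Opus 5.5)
The plan has three steps: identify the feasible interval, show the projection formula reproduces the announced budget, and compute the resulting loss. All of it is elementary interval arithmetic on the quantities in \eqref{eq:budget_cap}--\eqref{eq:announced_budget}.

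\emph{Step 1: the interval.} Each set $[0,r_{ij,k}^q(t)/(2L_{ij}^q)]$ in \eqref{eq:budget_feasible_interval} is nonempty, because the reserves are nonnegative and $L_{ij}^q>0$. All these intervals share the left endpoint $0$, so their intersection with $[0,\infty)$ is $[0,m]$, where $m$ is the minimum of the right endpoints. By \eqref{eq:budget_cap}, $m=\rho_j^{\capb}(t)$. If the index set is empty, the intersection is just $[0,\infty)$, which matches the convention $\min\varnothing=+\infty$. Hence $\mathcal B_j(t)=[0,\rho_j^{\capb}(t)]$, a nonempty closed convex set (possibly unbounded above).

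\emph{Step 2: the projection.} The objective in \eqref{eq:budget_projection_program} is strictly convex and coercive on this nonempty closed interval, so a unique minimizer exists. It is the Euclidean projection of $\rho_j^{\data}(t)$ onto $[0,\rho_j^{\capb}(t)]$. For an interval $[a,b]$ with $a\le b$, this projection is the clamp $\max\{a,\min\{\rho^{\data},b\}\}$. I would verify the clamp formula by a three-case check, or by the variational inequality $(\rho^\star-\rho^{\data})(\rho-\rho^\star)\ge0$ for all $\rho\in\mathcal B_j(t)$. With $a=0$ and $b=\rho_j^{\capb}(t)$, the clamp is exactly \eqref{eq:announced_budget}.

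\emph{Step 3: the loss.} Now assume $\rho_j^{\data}(t)\ge0$. Because $\rho_j^{\capb}(t)\ge0$ as well, the inner minimum is nonnegative, so the outer $\max\{0,\cdot\}$ is inactive and $\rho_j(t)=\min\{\rho_j^{\data}(t),\rho_j^{\capb}(t)\}$. The identity $a-\min\{a,b\}=[a-b]_+$ then gives \eqref{eq:budget_projection_loss}. When $\rho_j^{\capb}(t)=+\infty$, this holds under the stated convention $[a-\infty]_+=0$.

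The only step needing care is the bookkeeping of the unbounded case $\rho_j^{\capb}(t)=+\infty$ in Steps 1–3. There the interval is not compact, and existence of the minimizer must come from coercivity rather than compactness. I would treat this case separately in each step.
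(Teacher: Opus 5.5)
Your proposal is correct and follows essentially the same route as the paper: intersect the intervals that share the left endpoint $0$ to obtain $[0,\rho_j^{\capb}(t)]$, identify the unique minimizer of the strictly convex program as the Euclidean projection (the clamp in \eqref{eq:announced_budget}), and evaluate that projection for a nonnegative request. Your explicit treatment of the empty index set, and of existence via coercivity when $\rho_j^{\capb}(t)=+\infty$, is slightly more careful than the paper's proof, which leaves these points implicit.
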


\begin{proof}
Each two-layer edge inequality with $L_{ij}^q>0$ is equivalent to one interval in
\eqref{eq:budget_feasible_interval}. The intersection of intervals sharing the lower
endpoint zero is
\begin{align*}
\mathcal B_j(t)
&=\left[0,
\min_{i,k,q:L_{ij}^q>0}\frac{r_{ij,k}^q(t)}{2L_{ij}^q}\right]\\
&=[0,\rho_j^{\capb}(t)].
\end{align*}
The objective in \eqref{eq:budget_projection_program} is strictly convex, so its unique
minimizer is the Euclidean projection
$\Pi_{[0,\rho_j^{\capb}(t)]}(\rho_j^{\data}(t))$, which is precisely
\eqref{eq:announced_budget}. Evaluating the projection for a nonnegative request gives
\eqref{eq:budget_projection_loss}. Thus the cap changes the data request only by the
minimum scalar amount required by simultaneous edge feasibility.
\end{proof}

\subsection{Verifiable pairwise certificate}

\begin{theorem}
Suppose \eqref{eq:trust}--\eqref{eq:two_layer} hold for all accepted packets. Then, for
every directed edge, stage, and coupling constraint,
\begin{align}
g_{ij}^q(\widehat x_i(k|t),\widehat x_j(k|t))
&\leq-2L_{ij}^q\rho_j(t)-\epsilon_{ij}^q+L_{ij}^q\mu_j(t)\nonumber\\
&=-\epsilon_{ij}^q-L_{ij}^q\rho_j(t)
-L_{ij}^q\bigl(\rho_j(t)-\mu_j(t)\bigr)\leq0.
\label{eq:accepted_residual}
\end{align}
Under Assumption~\ref{ass:execution}, the actual pairwise constraints at time $t+1$
are hard satisfied. Moreover, each neighbor can independently verify the premise
$d_{H,j}^c(\widehat X_j,\bar X_j)\leq\rho_j$ from the two stored packets.
\label{thm:pairwise}
\end{theorem}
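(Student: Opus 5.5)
The plan is a one-step perturbation argument on the neighbor argument of $g_{ij}^q$, followed by bookkeeping with the trust constraint. Fix a directed edge $(i,j)$, a stage $k\in\{1,\ldots,H\}$, and a constraint index $q$. The accepted trajectory $\widehat X_i(t)$ lies in $\mathcal A_i(t)$. It is therefore either an optimizer candidate that passed every hard check, or the committed fallback. In both cases it satisfies \eqref{eq:tightened} with the budget $\rho_j(t)$ frozen in the pre-solve packet. For the fallback this follows from Lemma~\ref{lem:cap}.

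This gives $g_{ij}^q(\widehat x_i(k|t),\bar x_j(k|t))\leq-2L_{ij}^q\rho_j(t)-\epsilon_{ij}^q$. I then replace $\bar x_j(k|t)$ by $\widehat x_j(k|t)$ using \eqref{eq:lipschitz_constraint}. Only the second argument changes, so
\[
g_{ij}^q(\widehat x_i(k|t),\widehat x_j(k|t))\leq g_{ij}^q(\widehat x_i(k|t),\bar x_j(k|t))+L_{ij}^q d_j^c(\widehat x_j(k|t),\bar x_j(k|t)).
\]
By \eqref{eq:trajectory_metric} and \eqref{eq:measured_update}, the last term is at most $L_{ij}^q\mu_j(t)$. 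This gives the first line of \eqref{eq:accepted_residual}. The second line is an algebraic regrouping. Since $L_{ij}^q\geq0$, $\rho_j\geq0$ by \eqref{eq:announced_budget}, $\mu_j\leq\rho_j$ by the trust constraint \eqref{eq:trust} satisfied by $\widehat X_j$, and $\epsilon_{ij}^q>0$, every term is nonpositive.

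The main obstacle is making sure the Lipschitz bound is applicable. The points $\widehat x_j(k|t)$ and $\bar x_j(k|t)$ must lie in $\calD$. The pair must also be one on which \eqref{eq:lipschitz_constraint} was asserted, with the own argument fixed at $\widehat x_i(k|t)$. I would handle this by invoking Assumption~\ref{ass:regularity}: each local problem imposes $x_i(k|t)\in\calD_i$ as a hard constraint, and $\bar X_j$ is a shift of a previously accepted trajectory (the terminal step stays in $\calX_{f,j}\subset\calD_j$ by Assumption~\ref{ass:terminal}). I would also stress that the premise uses the \emph{same} $\rho_j(t)$ in $\eta_{ij}^q$ and in agent $j$'s trust constraint. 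This holds because both are read from the frozen packet $\mathcal C_j^-(t)$, so no same-step circularity arises.

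For the time-$t+1$ claim, Assumption~\ref{ass:execution} gives $x_i(t+1)=\widehat x_i(1|t)$ and $x_j(t+1)=\widehat x_j(1|t)$. The case $k=1$ of \eqref{eq:accepted_residual} then yields $g_{ij}^q(x_i(t+1),x_j(t+1))\leq-\epsilon_{ij}^q<0$. For verifiability, I would note that the neighbor stores $\bar X_j(t)$ from the first pre-solve round and receives $\widehat X_j(t)$ in $\mathcal C_j^+(t)$. It can therefore compute $d_{H,j}^c$ by evaluating finitely many norms and geodesic distances \eqref{eq:rotation_metric}, which are well defined inside the chart of radius below $\pi$. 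It then compares the result with the stored $\rho_j(t)$, and no reported scalar needs to be trusted.
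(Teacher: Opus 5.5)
Your proposal is correct and follows the paper's proof essentially step for step. The steps are the Lipschitz bound in the neighbor argument, passing from the stage distance to $\mu_j(t)$ through the trajectory maximum, regrouping with $\mu_j(t)\leq\rho_j(t)$, taking $k=1$ under nominal execution, and having the neighbor recompute the distance from the stored shift and the received trajectory. Your appeal to Lemma~\ref{lem:cap} for the fallback and your domain-membership check are harmless but not needed, since the theorem's premise already asserts the trust and tightened constraints for every accepted packet.
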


\begin{retained}
\begin{proof}
Fix a directed edge $(i,j)$, stage $k$, and index $q$. By
\eqref{eq:lipschitz_constraint},
\begin{align*}
g_{ij}^q(\widehat x_i,\widehat x_j)
&\leq g_{ij}^q(\widehat x_i,\bar x_j)
+L_{ij}^q d_j^c(\widehat x_j,\bar x_j)\\
&\leq g_{ij}^q(\widehat x_i,\bar x_j)
+L_{ij}^q d_{H,j}^c(\widehat X_j,\bar X_j)\\
&\leq-2L_{ij}^q\rho_j-\epsilon_{ij}^q
+L_{ij}^q\mu_j,
\end{align*}
where the last line uses \eqref{eq:tightened}--\eqref{eq:two_layer} and
\eqref{eq:measured_update}. Retaining the measured value before using
$\mu_j\leq\rho_j$ gives
\[
-2L_{ij}^q\rho_j-\epsilon_{ij}^q+L_{ij}^q\mu_j
=-\epsilon_{ij}^q-L_{ij}^q\rho_j-L_{ij}^q(\rho_j-\mu_j),
\]
which is \eqref{eq:accepted_residual}. At $k=1$,
Assumption~\ref{ass:execution} gives, for both incident agents,
$x_i(t+1)=\widehat x_i(1|t)$ and
$x_j(t+1)=\widehat x_j(1|t)$; hence
\[
g_{ij}^q(x_i(t+1),x_j(t+1))
=g_{ij}^q(\widehat x_i(1|t),\widehat x_j(1|t))\leq0.
\]
Finally, the receiver stores $\bar X_j(t)$ and receives $\widehat X_j(t)$, so every
stage distance and their maximum in \eqref{eq:trajectory_metric} are computable without
access to a private decision variable.
\end{proof}
\end{retained}

\begin{corollary}
\label{cor:network}
Stack all directed coupling functions into
$G(\widehat X(t))=\col\{g_{ij}^q(\widehat x_i(k|t),
\widehat x_j(k|t))\}$. Under Theorem~\ref{thm:pairwise},
\begin{align}
G(\widehat X(t))
&\leq-\col\{\epsilon_{ij}^q+L_{ij}^q\rho_j(t)
+L_{ij}^q(\rho_j(t)-\mu_j(t))\}\nonumber\\
&\leq-\col\{L_{ij}^q\rho_j(t)+\epsilon_{ij}^q\}\leq0
\label{eq:network_certificate}
\end{align}
componentwise. Thus validity of the scalar packet checks composes over the graph without
a centralized trajectory optimization.
\end{corollary}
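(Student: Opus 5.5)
The plan is to read Corollary~\ref{cor:network} directly off Theorem~\ref{thm:pairwise}, applied one component at a time, and then to use the bound $\mu_j(t)\leq\rho_j(t)$ from \eqref{eq:measured_update} to drop the nonnegative residual term.

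First, fix an arbitrary component of $G(\widehat X(t))$. It is indexed by a directed edge $(i,j)$, a stage $k\in\{1,\ldots,H\}$ and a constraint index $q$. The premises \eqref{eq:trust}--\eqref{eq:two_layer} hold for every accepted packet. Theorem~\ref{thm:pairwise} therefore applies to this component, and its equality rewriting in \eqref{eq:accepted_residual} gives
\[
g_{ij}^q(\widehat x_i(k|t),\widehat x_j(k|t))\leq-\epsilon_{ij}^q-L_{ij}^q\rho_j(t)-L_{ij}^q\bigl(\rho_j(t)-\mu_j(t)\bigr).
\]
Since the index was arbitrary, collecting these scalar inequalities with the same stacking order as $G$ yields the first componentwise inequality in \eqref{eq:network_certificate}.

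Second, I will discard the residual term. By \eqref{eq:measured_update}, $\mu_j(t)\leq\rho_j(t)$, and this premise is independently verifiable by each neighbor per Theorem~\ref{thm:pairwise}. Since $L_{ij}^q\geq0$ as a Lipschitz constant, the term $L_{ij}^q(\rho_j-\mu_j)$ is nonnegative, so removing it only weakens the bound to $-L_{ij}^q\rho_j(t)-\epsilon_{ij}^q$. For the final inequality, $\rho_j(t)\geq0$ by the outer $\max\{0,\cdot\}$ in \eqref{eq:announced_budget}, and $\epsilon_{ij}^q>0$ by declaration. The right-hand side is therefore nonpositive, and in fact strictly negative.

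The only point needing care is the compositional claim. Each scalar inequality involves only quantities local to the edge $(i,j)$: the trajectories of $i$ and $j$ and the pair $(\rho_j,\mu_j)$ checked from stored packets. The network bound is thus the conjunction of edge-local checks, and no joint optimization is needed. I would also note explicitly the degenerate case $L_{ij}^q=0$, in which the bound reduces to $-\epsilon_{ij}^q$. Beyond this bookkeeping there is no real obstacle; the corollary is a direct stacking of Theorem~\ref{thm:pairwise}.
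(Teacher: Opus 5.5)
Your proposal is correct and follows the paper's proof: apply \eqref{eq:accepted_residual} independently to each directed edge, stage, and constraint index, then stack, with the weaker bounds following from $\mu_j(t)\leq\rho_j(t)$, $L_{ij}^q\geq0$, $\rho_j(t)\geq0$, and $\epsilon_{ij}^q>0$. Restricting the stacked components to $k=1,\ldots,H$, where \eqref{eq:tightened} is actually imposed, is a useful precision that the paper leaves implicit.
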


\begin{proof}
Apply \eqref{eq:accepted_residual} independently to every directed edge, prediction
stage, and constraint index, and stack the resulting scalar inequalities. Componentwise
ordering is preserved by stacking, which gives \eqref{eq:network_certificate}. No
centralized decision variable is introduced because each component uses only its incident
accepted packet pair.
\end{proof}

\begin{corollary}
\label{cor:nonideal}
Let
$\nu_i(t+1)=d_i^c(x_i(t+1),\widehat x_i(1|t))$. If $g_{ij}^q$ is also Lipschitz in
its first argument with constant $L_{ij}^{q,i}$, then
\begin{align}
g_{ij}^q(x_i(t+1),x_j(t+1))
&\leq-\epsilon_{ij}^q-L_{ij}^q\rho_j
-L_{ij}^q(\rho_j-\mu_j)\nonumber\\
&\quad+L_{ij}^{q,i}\nu_i+L_{ij}^q\nu_j.
\label{eq:nonideal_residual}
\end{align}
The a posteriori execution margin carried by the accepted packet is
\begin{equation}
m_{ij}^q(t):=\epsilon_{ij}^q+L_{ij}^q\rho_j(t)
+L_{ij}^q\bigl(\rho_j(t)-\mu_j(t)\bigr).
\label{eq:execution_margin}
\end{equation}
Hence the true-state constraint is preserved whenever
\begin{equation}
L_{ij}^{q,i}\nu_i(t+1)+L_{ij}^q\nu_j(t+1)\leq m_{ij}^q(t).
\label{eq:execution_margin_test}
\end{equation}
Consequently, if known bounds $\nu_i\leq\bar\nu_i$ are included in the tightening as
$L_{ij}^{q,i}\bar\nu_i+L_{ij}^q\bar\nu_j$ and the resulting hard problem is feasible,
then the true state satisfies
$g_{ij}^q(x_i(t+1),x_j(t+1))\leq0$.
\end{corollary}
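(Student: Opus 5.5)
The plan is to combine the nominal residual \eqref{eq:accepted_residual} of Theorem~\ref{thm:pairwise}, taken at stage $k=1$, with the two-argument Lipschitz bound \eqref{eq:two_argument_lipschitz}. Here the perturbation is the execution error rather than a neighbor update. Fix a directed edge $(i,j)$ and an index $q$. Set $x_i=\widehat x_i(1|t)$ and $x_j=\widehat x_j(1|t)$, and set $y_i=x_i(t+1)$ and $y_j=x_j(t+1)$. Applying \eqref{eq:two_argument_lipschitz} gives
\[
g_{ij}^q(x_i(t+1),x_j(t+1))\leq g_{ij}^q(\widehat x_i(1|t),\widehat x_j(1|t))+L_{ij}^{q,i}\nu_i(t+1)+L_{ij}^q\nu_j(t+1).
\]
This uses the definition of $\nu_i$ and the symmetry of $d_i^c$. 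We must check that all four points lie in the compact set $\calD$ on which the constants are valid. For the predicted points this follows from the hard constraint $x_i(k|t)\in\calD_i$ in Assumption~\ref{ass:regularity}. For the realized states we must assume it, or else enlarge $\calD$ by the error bounds $\bar\nu_i$. I would state this explicitly, since it is the only point where the nonideal case needs more than the nominal hypotheses.

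Next, bound the first term on the right by the equality part of \eqref{eq:accepted_residual} at $k=1$. Theorem~\ref{thm:pairwise} derives that bound only from \eqref{eq:trust}--\eqref{eq:two_layer} and \eqref{eq:measured_update}, not from Assumption~\ref{ass:execution}, so it remains valid here. This yields \eqref{eq:nonideal_residual} directly. Writing the first three terms as $-m_{ij}^q(t)$ from \eqref{eq:execution_margin}, the right-hand side is nonpositive exactly when \eqref{eq:execution_margin_test} holds, which proves the preservation claim.

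For the final claim, replace $\eta_{ij}^q$ in \eqref{eq:two_layer} by $2L_{ij}^q\rho_j+\epsilon_{ij}^q+L_{ij}^{q,i}\bar\nu_i+L_{ij}^q\bar\nu_j$. Rerunning the chain in the proof of Theorem~\ref{thm:pairwise} then gives the residual with the extra term $-(L_{ij}^{q,i}\bar\nu_i+L_{ij}^q\bar\nu_j)$. Adding the execution perturbation, which is bounded by the same quantity because $\nu_i\leq\bar\nu_i$, leaves $-m_{ij}^q\leq0$. The feasibility of the enlarged hard problem is assumed in the statement, so no recursive argument is needed. The main subtlety I expect is the domain issue above: the Lipschitz constant $L_{ij}^{q,i}$ is only guaranteed on $\calD$, and the realized state may leave $\calD_i$. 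I would handle this by adding to the hypothesis that $x_i(t+1)\in\calD_i$, or equivalently by using a $\bar\nu$-neighbourhood of $\calD$ that still lies inside the chart of radius below $\pi$.
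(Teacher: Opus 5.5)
Your proposal is correct and follows the paper's proof essentially step for step. You apply the two-argument Lipschitz bound from the predicted to the realized successor states, then use the $k=1$ instance of \eqref{eq:accepted_residual}; as you note, that instance does not use Assumption~\ref{ass:execution}. You then rewrite the result through $m_{ij}^q(t)$ and use the robust $\eta$, whose extra terms $L_{ij}^{q,i}\bar\nu_i+L_{ij}^q\bar\nu_j$ cancel the execution error. Your added requirement that the realized states (and the mixed points) lie in $\calD$, or in a $\bar\nu$-enlargement inside the chart, is a hypothesis the paper leaves implicit, so stating it tightens the argument without changing its route.
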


\begin{proof}
Set $\widehat x_i^+=\widehat x_i(1|t)$ and
$\widehat x_j^+=\widehat x_j(1|t)$. Two applications of the Lipschitz inequality give
\begin{align*}
g_{ij}^q(x_i(t+1),x_j(t+1))
&\leq g_{ij}^q(\widehat x_i^+,\widehat x_j^+)
+L_{ij}^{q,i}d_i^c(x_i(t+1),\widehat x_i^+)\\
&\quad+L_{ij}^q d_j^c(x_j(t+1),\widehat x_j^+)\\
&\leq-\epsilon_{ij}^q-L_{ij}^q\rho_j
-L_{ij}^q(\rho_j-\mu_j)
+L_{ij}^{q,i}\nu_i+L_{ij}^q\nu_j,
\end{align*}
which proves \eqref{eq:nonideal_residual}. If
$L_{ij}^{q,i}\nu_i+L_{ij}^q\nu_j\leq m_{ij}^q(t)$, substitution of
\eqref{eq:execution_margin} proves the hard inequality directly. If only prior bounds
are available and $\nu_\ell\leq\bar\nu_\ell$, replace \eqref{eq:two_layer} by
\[
\eta_{ij}^{q,\mathrm{rob}}
=2L_{ij}^q\rho_j+\epsilon_{ij}^q
+L_{ij}^{q,i}\bar\nu_i+L_{ij}^q\bar\nu_j.
\]
Repeating the preceding nominal calculation gives
$g_{ij}^q(\widehat x_i^+,\widehat x_j^+)
\leq-\epsilon_{ij}^q-L_{ij}^q\rho_j-L_{ij}^q(\rho_j-\mu_j)
-L_{ij}^{q,i}\bar\nu_i-L_{ij}^q\bar\nu_j$.
Substitution in the first inequality cancels both error terms and yields
$g_{ij}^q(x_i(t+1),x_j(t+1))\leq
-\epsilon_{ij}^q-L_{ij}^q\rho_j-L_{ij}^q(\rho_j-\mu_j)\leq0$.
Bounded packet age may be converted into $\bar\nu_i$ through a model-dependent state
increment bound; unbounded delay and asynchronous execution are outside this result.
\end{proof}

\subsection{Recursive feasibility by budget projection}

\begin{theorem}
\label{thm:recursive}
Let Assumptions~\ref{ass:regularity}--\ref{ass:terminal} hold. Suppose the time-$0$
shifted candidate satisfies the hard constraints with the offsets
$\epsilon_{ij}^q$. If budgets are selected by \eqref{eq:budget_cap}--
\eqref{eq:announced_budget}, then the hard local problems are recursively feasible. The
shifted candidate is an admissible fallback at every sampling instant, and the actual
nominal pairwise constraints remain satisfied.
\end{theorem}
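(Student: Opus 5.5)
We argue by induction on $t$, using as invariant the statement that at sample $t$ the committed shifts $\bar X_i(t)$ are well defined, satisfy $\bar x_i(0|t)=x_i(t)$, obey the local state, input and terminal constraints of $\mathcal P_i(t)$, and have nonnegative raw reserves $r_{ij,k}^q(t)\geq0$ in \eqref{eq:raw_reserve} for all edges, $q$ and $k=1,\ldots,H$. The base case $t=0$ is the stated hypothesis on the time-$0$ shifted candidate. Given the invariant at $t$, Lemma~\ref{lem:cap} shows that $\bar X_i(t)$ (with its input sequence $U_i^{\rm fb}(t)$) has zero trust distance and meets every tightened constraint \eqref{eq:tightened}--\eqref{eq:two_layer}. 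Hence $\mathcal P_i(t)$ is feasible and the fallback lies in $\mathcal A_i(t)$, so $\widehat U_i(t)$ exists and satisfies all hard constraints by the acceptance rule \eqref{eq:acceptance_rule}.

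The inductive step shows that $\bar X_i(t+1)$, built from $\widehat X_i(t)$ via \eqref{eq:shift1}--\eqref{eq:shift2}, again satisfies the invariant. We split it into three parts. (a) \emph{Initial state.} By Assumption~\ref{ass:execution}, $\bar x_i(0|t+1)=\widehat x_i(1|t)=x_i(t+1)$. (b) \emph{Local constraints.} Stages $0,\ldots,H-1$ of the shift are stages $1,\ldots,H$ of an accepted trajectory, so they lie in $\calX_i\cap\calD_i$ with admissible inputs. The appended stage uses $\kappa_i$ at $\widehat x_i(H|t)\in\calX_{f,i}$, which by Assumption~\ref{ass:terminal} respects the input constraints and returns to $\calX_{f,i}\subset\calX_i\cap\calD_i$; this gives the hard terminal constraint \eqref{eq:hard_terminal} for the new shift. (c) \emph{Reserves.} For $k=1,\ldots,H-1$ we have $\bar x_i(k|t+1)=\widehat x_i(k+1|t)$, and Theorem~\ref{thm:pairwise} (equivalently Corollary~\ref{cor:network}) applied to the accepted pair at stage $k+1\leq H$ gives
\[
g_{ij}^q(\bar x_i(k|t+1),\bar x_j(k|t+1))\leq-\epsilon_{ij}^q-L_{ij}^q\rho_j(t)-L_{ij}^q(\rho_j(t)-\mu_j(t))\leq-\epsilon_{ij}^q,
\]
so $r_{ij,k}^q(t+1)\geq L_{ij}^q(2\rho_j(t)-\mu_j(t))\geq0$. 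For $k=H$, both $\widehat x_i(H|t)$ and $\widehat x_j(H|t)$ lie in the terminal sets, and the edge condition in Assumption~\ref{ass:terminal} gives $g_{ij}^q\leq-\epsilon_{ij}^q$ for the appended states, i.e. $r_{ij,H}^q(t+1)\geq0$. Thus the pre-solve test never fails, $\rho_j^{\capb}(t+1)\geq0$ is well defined, and the induction closes. Finally, hard nominal pairwise satisfaction at every $t+1$ follows directly from Theorem~\ref{thm:pairwise} under Assumption~\ref{ass:execution}.

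The main obstacle is the terminal stage $k=H$. Assumption~\ref{ass:terminal} bounds $g_{ij}^q$ only on the product terminal set, so we must make sure that every agent's accepted terminal state lies in its own $\calX_{f,i}$ simultaneously. This holds because \eqref{eq:hard_terminal} is imposed hard and every element of $\mathcal A_i(t)$, including the fallback, satisfies it. The step must also use that the appended stage receives no budget reserve, which is exactly why only $\epsilon_{ij}^q$, and not the $2L\rho$ layer, appears there; this is compatible with the cap, which then yields a budget bounded by zero on that edge if the reserve is tight, still feasible. A second point to check is that stage $0$ of the shift needs no edge reserve, since \eqref{eq:tightened} and \eqref{eq:budget_cap} range over $k\geq1$ only.
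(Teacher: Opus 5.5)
Your proposal is correct and follows essentially the same argument as the paper. You shift the accepted trajectory and append $\kappa_i$, inherit the local constraints, use Theorem~\ref{thm:pairwise} at old stage $k+1$ for $1\leq k<H$ and the terminal edge condition of Assumption~\ref{ass:terminal} at $k=H$ to get $r_{ij,k}^q(t+1)\geq0$, and then invoke Lemma~\ref{lem:cap} to certify the new shift as a hard-feasible fallback with zero trust distance. The only difference is bookkeeping: you take nonnegativity of the committed-shift reserves as the induction invariant (as in Corollary~\ref{cor:packet_invariance}) and apply Lemma~\ref{lem:cap} at the start of each step, whereas the paper takes feasibility of $\mathcal P_i(t)$ as the invariant and applies the lemma at the end.
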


\begin{retained}
\begin{proof}
The time-$0$ candidate is feasible by hypothesis. For the induction step, assume that
the time-$t$ hard problems are feasible and let
$\widehat U_i(t)=\{\widehat u_i(0|t),\ldots,\widehat u_i(H-1|t)\}$ denote the input
sequence associated with the accepted state trajectory $\widehat X_i(t)$. Define
the time-$(t+1)$ candidate by deleting the applied input and appending the terminal
feedback:
\begin{align}
\widetilde U_i(t+1)&=\{\widehat u_i(1|t),\ldots,
\widehat u_i(H-1|t),\kappa_i(\widehat x_i(H|t))\},
\label{eq:shifted_input}\\
\widetilde x_i(k|t+1)&=\widehat x_i(k+1|t),\quad 0\leq k<H,\nonumber\\
\widetilde x_i(H|t+1)&=f_i(\widehat x_i(H|t),
\kappa_i(\widehat x_i(H|t))).
\label{eq:shifted_state}
\end{align}
By construction, $\widetilde X_i(t+1)=\bar X_i(t+1)$. Assumption~\ref{ass:execution}
gives $\widetilde x_i(0|t+1)=\widehat x_i(1|t)=x_i(t+1)$, so the initial condition of
the next local problem is satisfied. For $0\leq k\leq H-2$,
\begin{align*}
\widetilde x_i(k+1|t+1)
&=\widehat x_i(k+2|t)\\
&=f_i(\widehat x_i(k+1|t),\widehat u_i(k+1|t))\\
&=f_i(\widetilde x_i(k|t+1),\widetilde u_i(k|t+1)),
\end{align*}
and the same equality at $k=H-1$ follows from the last line of
\eqref{eq:shifted_state}. Thus the dynamics hold at every stage. For
$0\leq k<H$, $\widetilde x_i(k|t+1)=\widehat x_i(k+1|t)$, and for
$0\leq k<H-1$, $\widetilde u_i(k|t+1)=\widehat u_i(k+1|t)$; their hard state and
input constraints are inherited from the accepted solution. Assumption~\ref{ass:terminal}
gives
\[
\kappa_i(\widehat x_i(H|t))\in\calU_i,\qquad
\widetilde x_i(H|t+1)\in\calX_{f,i}\subseteq\calX_i\cap\calD_i,
\]
so the appended input, state, and terminal constraint are also admissible.

It remains to verify the coupled constraints. For $1\leq k<H$,
Theorem~\ref{thm:pairwise} at old stage $k+1$ gives
\begin{equation}
g_{ij}^q(\widetilde x_i(k|t+1),\widetilde x_j(k|t+1))
\leq-L_{ij}^q\rho_j(t)-\epsilon_{ij}^q
\leq-\epsilon_{ij}^q.
\label{eq:inherited_edge}
\end{equation}
For $k=H$, the terminal product condition in Assumption~\ref{ass:terminal} gives the
same final inequality. Hence, for every constrained $(i,j,k,q)$,
\[
r_{ij,k}^q(t+1)
=-g_{ij}^q(\bar x_i(k|t+1),\bar x_j(k|t+1))
-\epsilon_{ij}^q\geq0.
\]
Consequently every reserve
$r_{ij,k}^q(t+1)$ in \eqref{eq:raw_reserve} is nonnegative. Applying
Lemma~\ref{lem:cap} with the newly announced budget gives
\begin{equation}
g_{ij}^q(\bar x_i(k|t+1),\bar x_j(k|t+1))
\leq-2L_{ij}^q\rho_j(t+1)-\epsilon_{ij}^q.
\label{eq:next_tightened_shift}
\end{equation}
If agent $i$ chooses its committed trajectory as the decision trajectory, then
\eqref{eq:next_tightened_shift} is exactly the tightened edge constraint and
$d_{H,i}^c(\bar X_i(t+1),\bar X_i(t+1))=0$ satisfies the trust region. Thus a hard-feasible
point satisfying the dynamics, local constraints, terminal constraint, trust region,
and every coupling constraint is known before the time-$(t+1)$ numerical optimization.

The argument is based on the accepted trajectory, not on an unchecked optimizer output.
If the optimizer fails any hard numerical test, the committed shifted candidate is
accepted and its first input is applied, so the same construction remains valid.
Thus feasibility at $t$ implies feasibility at $t+1$; induction from the base case proves
recursive feasibility for all $t\geq0$. Finally, Assumption~\ref{ass:execution} and
Theorem~\ref{thm:pairwise} at $k=1$ give
$g_{ij}^q(x_i(t+1),x_j(t+1))\leq0$ for every actual nominal edge.
\end{proof}
\end{retained}

\begin{corollary}
\label{cor:packet_invariance}
Let $\mathfrak C$ contain all pairs $(\bar X,\rho)$ for which every $\bar X_i$ satisfies
the shifted local and terminal conditions and
\begin{equation}
g_{ij}^q(\bar x_i(k),\bar x_j(k))
+2L_{ij}^q\rho_j+\epsilon_{ij}^q\leq0
\label{eq:certified_prediction_set}
\end{equation}
for every directed edge, stage, and constraint. Under the conditions of
Theorem~\ref{thm:recursive}, the accepted-trajectory, terminal-extension, and budget-cap
update maps $\mathfrak C$ into itself. Thus $\mathfrak C$ is positively invariant for
the augmented closed-loop state consisting of the plant state and committed prediction
packets.
\end{corollary}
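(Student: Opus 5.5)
The plan is to show that one full sample of Algorithm~\ref{alg:protocol} sends a pair $(\bar X(t),\rho(t))\in\mathfrak C$ to a pair $(\bar X(t+1),\rho(t+1))\in\mathfrak C$. The argument has three parts: membership of $\mathfrak C$ gives feasibility at time $t$; the accepted trajectories give nonnegative reserves at time $t+1$; and the cap then restores the two-layer inequality with the new budgets. Everything reuses Lemma~\ref{lem:cap}, Theorem~\ref{thm:pairwise}, and the shift construction in the proof of Theorem~\ref{thm:recursive}.

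\textbf{Feasibility at time $t$.} Assume $(\bar X(t),\rho(t))\in\mathfrak C$. Inequality \eqref{eq:certified_prediction_set} is exactly the tightened constraint \eqref{eq:tightened}--\eqref{eq:two_layer} evaluated at $X_i=\bar X_i(t)$. The trust distance of $\bar X_i(t)$ to itself is zero. The shifted local and terminal conditions are part of the definition of $\mathfrak C$. Under Assumption~\ref{ass:execution}, the initial condition is $\bar x_i(0|t)=x_i(t)$. Hence $\bar X_i(t)$ is feasible for $\mathcal P_i(t)$, so $\mathcal A_i(t)\neq\varnothing$ and an accepted $\widehat X_i(t)$ exists. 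Whichever candidate is accepted, fallback or optimizer output, it satisfies \eqref{eq:trust}--\eqref{eq:hard_terminal}.

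\textbf{Reserves at time $t+1$.} Form $\bar X_i(t+1)$ from $\widehat X_i(t)$ via \eqref{eq:shift1}--\eqref{eq:shift2}. The dynamics, local, and terminal checks are verbatim those in the proof of Theorem~\ref{thm:recursive}, using Assumption~\ref{ass:terminal} for the appended stage. For the edge constraints:
\begin{itemize}
\item For $1\le k<H$, \eqref{eq:accepted_residual} at old stage $k+1$ gives $g_{ij}^q(\bar x_i(k|t+1),\bar x_j(k|t+1))\le-\epsilon_{ij}^q$.
\item For $k=H$, both terminal states $\widehat x_i(H|t)$ and $\widehat x_j(H|t)$ lie in the product terminal set by \eqref{eq:hard_terminal}, so the edge condition in Assumption~\ref{ass:terminal} gives the same bound.
\end{itemize}
Hence every reserve $r_{ij,k}^q(t+1)$ is nonnegative. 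Next, \eqref{eq:data_budget} and \eqref{eq:announced_budget} produce $\rho_j(t+1)\ge0$ for any request. Lemma~\ref{lem:cap} then yields \eqref{eq:certified_prediction_set} for the new pair. Equivalently, by Proposition~\ref{prop:budget_projection}, $\rho_j(t+1)\in\mathcal B_j(t+1)$.

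\textbf{Closing the induction.} The new plant state is $x_i(t+1)=\widehat x_i(1|t)=\bar x_i(0|t+1)$, so the augmented state stays consistent and membership holds at $t+1$. This gives positive invariance.

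The main obstacle is making the definition of $\mathfrak C$ precise. I would fix it to mean that the $\rho$ components are those produced by the cap, or at least lie in $[0,\rho^{\capb}]$. I would also require the stage range in \eqref{eq:certified_prediction_set} to match $k=1,\dots,H$ used in \eqref{eq:budget_cap}. Finally, the "shifted local conditions" must include the initial-state consistency with the plant and membership in $\calD_i$. With these choices, the argument above does not depend on which element of $\mathcal A_i(t)$ is selected.
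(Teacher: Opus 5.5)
Your proposal is correct and follows essentially the same route as the paper's proof. Membership in $\mathfrak C$ gives feasibility of the committed shift; Theorem~\ref{thm:pairwise} at inherited stages and the terminal edge condition at the appended stage give nonnegative reserves at $t+1$; the cap (Lemma~\ref{lem:cap}) then restores the two-layer inequality, with nominal execution supplying initial-state consistency. Your clarifications of $\mathfrak C$ (nonnegative budgets, the stage range $k=1,\dots,H$, and plant consistency of the shift) are points the paper leaves implicit, and the stage-range restriction is genuinely needed because the cap does not cover $k=0$.
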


\begin{proof}
Take $(\bar X(t),\rho(t))\in\mathfrak C$. Equation
\eqref{eq:certified_prediction_set}, zero trust distance, and the local and terminal
conditions make $\bar X_i(t)$ feasible for every local problem. After the hard
acceptance check, Theorem~\ref{thm:pairwise} gives
\[
g_{ij}^q(\widehat x_i(k|t),\widehat x_j(k|t))
\leq-L_{ij}^q\rho_j(t)-\epsilon_{ij}^q
\leq-\epsilon_{ij}^q.
\]
Shifting the accepted trajectories preserves this inequality at inherited stages, while
Assumption~\ref{ass:terminal} supplies it at the appended stage. Therefore
$r_{ij,k}^q(t+1)\geq0$ for every index. By
\eqref{eq:budget_cap}--\eqref{eq:announced_budget},
$2L_{ij}^q\rho_j(t+1)\leq r_{ij,k}^q(t+1)$, and hence
\[
g_{ij}^q(\bar x_i(k|t+1),\bar x_j(k|t+1))
+2L_{ij}^q\rho_j(t+1)+\epsilon_{ij}^q\leq0.
\]
The local and terminal conditions are preserved by the same shift construction, and
$\bar x_i(0|t+1)=x_i(t+1)$ by nominal execution. Thus
$(\bar X(t+1),\rho(t+1))\in\mathfrak C$.
\end{proof}

The shifted-feasibility margin used below is
\begin{equation}
\mathcal R(t)=\min_{i,j,k,q}
\{-g_{ij}^q(\bar x_i(k|t),\bar x_j(k|t))
-2L_{ij}^q\rho_j(t)-\epsilon_{ij}^q\}.
\label{eq:shift_feasibility_margin}
\end{equation}
The shifted candidate passes exactly when $\mathcal R(t)\geq0$, together with the local
state, input, trust, and terminal checks. A fixed budget that is not projected through
\eqref{eq:budget_cap} can have $\mathcal R(t)<0$ even if a new optimizer happens to be
found.

\subsection{Cost regularity and practical decrease}

\begin{lemma}
\label{lem:cost_lipschitz}
Let $j=i$ represent an own-shift regularizer if one is used. Suppose each packet-dependent stage term
$\ell_{ij,k}(x_i,u_i,x_j)$, $j\in\calN_i^+$, is continuously differentiable on the compact set
$\calD\times\calU_i\times\calD$. Then, for fixed own state and input trajectories,
\begin{equation}
|J_i(U_i;x_i,Y_{\calN_i^+})-J_i(U_i;x_i,Z_{\calN_i^+})|
\leq\sum_{j\in\calN_i^+}L_{ij}^{J}d_{H,j}(Y_j,Z_j),
\label{eq:cost_lipschitz}
\end{equation}
where one valid explicit choice is
\begin{equation}
L_{ij}^{J}=\sum_{k=0}^{H-1}
\sup_{\calD\times\calU_i\times\calD}
\|D_{x_j}\ell_{ij,k}\|_{(d_j)^*}.
\end{equation}
For the quadratic relative terms in \eqref{eq:stage_cost}, let
$\bar e_{ij}^{p}=\sup_{\calD}\|e_{ij}^{p}\|$,
$\bar e_{ij}^{R}=\sup_{\calD}\|e_{ij}^{R}\|$, and
\begin{equation}
c_{\log}(\bar\theta)=
\sup_{\|\phi\|\leq\bar\theta<\pi}\|J_r^{-1}(\phi)\|.
\label{eq:log_jacobian_bound}
\end{equation}
Here $J_r^{-1}$ is the inverse right Jacobian of the $\SO$ logarithm.
Direct differentiation gives the computable stage bounds
\begin{align}
M_{ij,k}^{p}&\leq2\|Q_{ij}^{p}\|\bar e_{ij}^{p},\nonumber\\
M_{ij,k}^{R}&\leq
2\|Q_{ij}^{R}\|\bar e_{ij}^{R}c_{\log}(\bar\theta),
\qquad
M_{ij,k}\leq M_{ij,k}^{p}+M_{ij,k}^{R}.
\label{eq:quadratic_cost_constants}
\end{align}
Neighbor-dependent terminal terms, if present, add their corresponding supremum.
\end{lemma}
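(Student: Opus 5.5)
The plan is a stagewise mean-value argument along geodesics, followed by summation over stages and a direct bound on the gradients of the quadratic terms. With the own trajectory $(x_i(k),u_i(k))$ fixed, only the packet-dependent terms $\ell_{ij,k}(x_i(k),u_i(k),\cdot)$ differ between $J_i(U_i;x_i,Y_{\calN_i^+})$ and $J_i(U_i;x_i,Z_{\calN_i^+})$; the packet-independent part $\ell_i^0$ and the input-rate term cancel. The difference is therefore at most $\sum_{j\in\calN_i^+}\sum_{k=0}^{H-1}|\ell_{ij,k}(x_i,u_i,y_j(k))-\ell_{ij,k}(x_i,u_i,z_j(k))|$, plus a neighbor-dependent terminal term if one is present, which is handled the same way.

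For each stage, Assumption~\ref{ass:regularity} lets me join $y_j(k)$ and $z_j(k)$ by a curve $\gamma$ in $\calD_j$, using local geodesic convexity. The curve is a straight segment in each Euclidean factor and the minimizing geodesic $S\,\Exp(s\,\Log(S^\top R)^\vee)$ in the $\SO$ factor. Its componentwise speeds integrate to the component distances, so its length in the product metric $d_j$ equals $d_j(y_j(k),z_j(k))$. Because $\ell_{ij,k}$ is $C^1$, the fundamental theorem of calculus along $\gamma$ gives
\[
|\Delta\ell_{ij,k}|\le\sup_{\calD\times\calU_i\times\calD}\|D_{x_j}\ell_{ij,k}\|_{(d_j)^*}\,d_j(y_j(k),z_j(k)).
\]
Here the dual norm is taken with respect to the weighted sum of component norms, so it is the maximum of the component gradient norms. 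Compactness makes the supremum finite. Bounding each $d_j(y_j(k),z_j(k))$ by $d_{H,j}(Y_j,Z_j)$ and summing over $k$ yields \eqref{eq:cost_lipschitz} with the stated $L_{ij}^J$.

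For the explicit constants I differentiate the two relative terms. For $\|e_{ij}^p\|_{Q_{ij}^p}^2$, the derivative with respect to $\bar p_j$ is $2Q_{ij}^pe_{ij}^p$, whose norm is at most $2\|Q_{ij}^p\|\bar e_{ij}^p$. For the attitude term, perturbing $\bar R_j$ on the right by $\Exp(\delta)$ changes the argument of the logarithm by right multiplication, so $de_{ij}^R=J_r^{-1}(e_{ij}^R)\,\delta$. The chain rule then gives gradient norm at most $2\|Q_{ij}^R\|\bar e_{ij}^R\,c_{\log}(\bar\theta)$, provided $\|e_{ij}^R\|\le\bar\theta<\pi$ holds on $\calD$. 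That proviso follows from the chart restriction in Assumption~\ref{ass:regularity}, and it keeps $c_{\log}$ finite since $J_r^{-1}$ blows up only at $\pi$. Adding the position and attitude contributions gives $M_{ij,k}$.

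The main obstacle is the $\SO$ step. I must check that the geodesic stays inside $\calD_j$ and the logarithm chart, which I will justify from local geodesic convexity. I must also confirm that right-trivialized increments have norm equal to the geodesic speed under bi-invariance, so that integrating the directional derivative indeed bounds the change by $d_R$ times the gradient norm.
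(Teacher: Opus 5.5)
Your proposal is correct and follows essentially the paper's route---the fundamental theorem of calculus along a curve in $\calD_j$ joining $Z_j(k)$ to $Y_j(k)$, the dual-norm bound, $d_j\le d_{H,j}$, and summation over stages. The differences are minor: you use the exact product geodesic, so no $\delta\downarrow0$ limit is needed, but that particular geodesic must lie in $\calD_j$, whereas the paper only needs some near-minimal curve there; and you actually carry out the differentiation of the quadratic relative terms, which the paper only asserts. One small correction: $\|e_{ij}^R\|\le\bar\theta<\pi$ does not follow from the per-agent chart radius, since two attitudes each within $3.0$ rad of their targets can produce a relative error of angle $\pi$; it does follow from the $C^1$ hypothesis on the compact set, because the squared principal logarithm has a kink at the cut locus, or it can be read as part of the standing convention that all logarithms are evaluated on balls of radius below $\pi$.
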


\begin{proof}
Fix $i,j,k$ and put
$M_{ij,k}=\sup\|D_{x_j}\ell_{ij,k}\|_{(d_j)^*}$. For any $\delta>0$, choose
an absolutely continuous local curve $\gamma:[0,1]\to\calD$ from $Z_j(k)$ to
$Y_j(k)$ whose $d_j$-length is at most
$d_j(Y_j(k),Z_j(k))+\delta$. The fundamental theorem of calculus and dual-norm
inequality give
\begin{align*}
&|\ell_{ij,k}(x_i,u_i,Y_j(k))
-\ell_{ij,k}(x_i,u_i,Z_j(k))|\\
&\quad\leq\int_0^1
\|D_{x_j}\ell_{ij,k}(x_i,u_i,\gamma(s))\|_{(d_j)^*}
\|\dot\gamma(s)\|_{d_j}\,ds\\
&\quad\leq M_{ij,k}
\bigl(d_j(Y_j(k),Z_j(k))+\delta\bigr).
\end{align*}
Letting $\delta\downarrow0$, summing over $k$ and $j$, and using
$d_j(Y_j(k),Z_j(k))\leq d_{H,j}(Y_j,Z_j)$ proves
\eqref{eq:cost_lipschitz}. Compactness and continuous differentiability make every
$M_{ij,k}$ finite. A packet-dependent terminal term is handled by the identical
calculation at $k=H$.
\end{proof}

\begin{lemma}
\label{lem:shift_regular}
Let $S_i$ denote the shift-and-terminal-extension map defined by
\eqref{eq:shifted_input}--\eqref{eq:shifted_state}. On the compact local operating set,
there is a finite $c_{S,i}>0$ such that
\begin{equation}
d_{H,i}(S_iY_i,S_iZ_i)\leq c_{S,i}d_{H,i}(Y_i,Z_i).
\label{eq:shift_lipschitz}
\end{equation}
\end{lemma}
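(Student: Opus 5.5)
The plan is to compare the two shifted trajectories stage by stage, treating the inherited stages and the appended terminal stage separately. Write $Y_i=\{y_i(k)\}_{k=0}^H$ and $Z_i=\{z_i(k)\}_{k=0}^H$ for two trajectories in the compact operating set $\calD_i$. By \eqref{eq:shifted_state}, the shifted trajectories satisfy $(S_iY_i)(k)=y_i(k+1)$ for $0\leq k<H$. Hence, for every such $k$,
\[
d_i\bigl((S_iY_i)(k),(S_iZ_i)(k)\bigr)=d_i\bigl(y_i(k+1),z_i(k+1)\bigr)\leq d_{H,i}(Y_i,Z_i),
\]
so the inherited stages contribute a constant equal to one. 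The substance of the lemma therefore lies entirely in the appended stage $H$, where
\[
(S_iY_i)(H)=F_i(y_i(H)),\qquad F_i(x):=f_i(x,\kappa_i(x)).
\]

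For the terminal stage, the aim is to show that the closed-loop terminal map $F_i$ is Lipschitz on $\calD_i$ with respect to $d_i$. Assumption~\ref{ass:regularity} states that both $f_i$ and $\kappa_i$ are locally Lipschitz on $\calD$. I would first show that $\kappa_i$ is Lipschitz as a map into $\calU_i$. The numerical instance uses an LQR law on the local coordinates $e_i$ followed by a componentwise projection, and projection is $1$-Lipschitz, so this step costs nothing. The composition $F_i$ is then locally Lipschitz.

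The next step upgrades local to global Lipschitz continuity on the compact set. Cover $\calD_i$ by finitely many balls on which $F_i$ is Lipschitz. For points far apart, use the standard alternative: if $d_i(x,y)\geq\delta$ with $\delta$ a Lebesgue number of the cover, then
\[
d_i(F_i(x),F_i(y))\leq \frac{\operatorname{diam}F_i(\calD_i)}{\delta}\,d_i(x,y).
\]
The image diameter is finite because $F_i$ is continuous and $\calD_i$ is compact. Alternatively, local geodesic convexity of $\calD_i$ lets me chain local estimates along a geodesic segment, as in the proof of Lemma~\ref{lem:cost_lipschitz}. Either route yields a constant $L_{F,i}$ with
\[
d_i(F_i(y),F_i(z))\leq L_{F,i}\,d_i(y,z)\leq L_{F,i}\,d_{H,i}(Y_i,Z_i).
\]

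Taking the maximum over both stage types gives
\[
c_{S,i}=\max\{1,L_{F,i}\},
\]
which is finite and positive. One subtlety must be checked: for the rotational component, the metric $d_i$ uses $d_R$, which is only a metric on a chart of radius below $\pi$. I must therefore confirm that $F_i(\calD_i)$ stays in the region where $d_R$ is defined. If the image leaves $\calD_i$, I would restrict to trajectories whose terminal states lie in $\calX_{f,i}$. These are the only ones to which $S_i$ is applied, and Assumption~\ref{ass:terminal} gives $F_i(\calX_{f,i})\subseteq\calX_{f,i}\subseteq\calD_i$.

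I expect the main obstacle to be this terminal-stage Lipschitz constant, specifically the equivalence between the Lipschitz notion in Assumption~\ref{ass:regularity} and the product metric $d_i$ on $\R^n\times\SO$. The key fact is that within a chart of radius below $\pi$, $d_R$ is bi-Lipschitz equivalent to the Euclidean norm of the logarithmic coordinate. I would use this equivalence, with constants from $c_{\log}$ in \eqref{eq:log_jacobian_bound}, to transfer a coordinate Lipschitz bound on $F_i$ into a $d_i$ bound. That transfer accounts for the dependence of $c_{S,i}$ on the operating set.
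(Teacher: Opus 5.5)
Your proposal is correct and follows essentially the paper's proof. You use the same stage-wise split (constant $1$ on the inherited stages, and a terminal-map constant $L_{F,i}$ for $F_i(x)=f_i(x,\kappa_i(x))$ obtained by composing the Lipschitz bounds of $f_i$ and the projected $\kappa_i$), which gives $c_{S,i}=\max\{1,L_{F,i}\}$; your Lebesgue-number local-to-global step and your restriction to terminal states in $\calX_{f,i}$ spell out what the paper compresses into ``local Lipschitz continuity on the compact chart gives finite constants.''
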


\begin{proof}
Define $F_i(x)=f_i(x,\kappa_i(x))$. In local coordinates, local Lipschitz
continuity on the compact chart gives finite constants
$L_{f,x}^i$, $L_{f,u}^i$, and $L_\kappa^i$ such that
\begin{align*}
d_i(f_i(x,u),f_i(y,v))
&\leq L_{f,x}^i d_i(x,y)+L_{f,u}^i\|u-v\|,\\
\|\kappa_i(x)-\kappa_i(y)\|&\leq L_\kappa^i d_i(x,y).
\end{align*}
For the projected LQR law, nonexpansiveness of the box projection and local metric
equivalence give a finite choice proportional to
$\|K_i\|\operatorname{Lip}(e_i)$. Hence
\begin{equation}
d_i(F_i(y),F_i(z))
\leq\bigl(L_{f,x}^i+L_{f,u}^iL_\kappa^i\bigr)d_i(y,z)
=:L_{F,i}d_i(y,z).
\label{eq:terminal_map_derivative}
\end{equation}
This argument also covers projection breakpoints and does not require differentiability
of $\kappa_i$ there.
For $0\leq k<H$, $(S_iY_i)(k)=Y_i(k+1)$, while
$(S_iY_i)(H)=F_i(Y_i(H))$. Consequently,
\begin{align*}
d_{H,i}(S_iY_i,S_iZ_i)
&=\max\!\left\{
\max_{0\leq k<H}d_i(Y_i(k+1),Z_i(k+1)),\right.\\
&\hspace{7.2em}\left.
d_i(F_i(Y_i(H)),F_i(Z_i(H)))\right\}\\
&\leq\max\{1,L_{F,i}\}\,d_{H,i}(Y_i,Z_i).
\end{align*}
Thus \eqref{eq:shift_lipschitz} holds with
$c_{S,i}=\max\{1,L_{F,i}\}$.
\end{proof}

\begin{assumption}
\label{ass:value_terminal}
The same stage-cost functions and weights are used at every time and horizon index.
Let $z=(z_1,\ldots,z_N)$ collect accepted terminal states and let
$y=(y_1,\ldots,y_N)$ collect the committed terminal packet states. For every terminal
pair $(z,y)$ encountered in the certified local set, with
$z_i,y_i\in\calX_{f,i}$, the terminal feedback and value satisfy
\begin{equation}
\sum_{i=1}^{N}\!\left[V_{f,i}(F_i(z_i))-V_{f,i}(z_i)
+\ell_i^{T}(z_i,\kappa_i(z_i),y_{\calN_i^+})\right]\leq0,
\label{eq:network_terminal_decrease}
\end{equation}
where $F_i(z_i)=f_i(z_i,\kappa_i(z_i))$ and $\ell_i^{T}$ is the complete appended
stage generated by the shifted packet, including own-shift and coordination terms.
The two terminal vectors are kept distinct because the shifted candidate propagates
$\widehat x_i(H|t)$ while its packet-dependent terms use $\bar x_j(H|t)$. Slew
variables may be included in the local Euclidean state. This condition is local and is
required only for the value-decrease result, not for
Theorems~\ref{thm:pairwise}--\ref{thm:recursive}.
\end{assumption}

Combining Lemma~\ref{lem:cost_lipschitz}, the shifted-sequence argument, and
Assumption~\ref{ass:value_terminal} gives the following local statement.

\begin{theorem}
\label{thm:decrease}
Under the assumptions of Theorem~\ref{thm:recursive} and
Assumption~\ref{ass:value_terminal}, suppose the nominal network
stage cost is lower bounded by a positive-definite function
$\alpha(\|e(t)\|)$. Then finite constants $C_j\geq0$ exist such that
\begin{equation}
V(t+1)-V(t)
\leq-\alpha(\|e(t)\|)+\sum_{j=1}^{N}C_j\chi_j(t),
\label{eq:practical_decrease}
\end{equation}
where $V(t)=\sum_i\widehat J_i(t)$ is the sum of the actually accepted hard-feasible
costs in \eqref{eq:accepted_value}. Here $e(t)$ denotes the closed-loop error used in
the comparison bounds; it includes stored packet or slew-state deviations when those
variables are retained in the value function, and the stated stage-cost lower bound is
understood with respect to this same error coordinate. Suppose, on a forward-invariant local
sublevel set,
\[
\underline\alpha_V(\|e\|)\leq V\leq\overline\alpha_V(\|e\|),
\]
where $\underline\alpha_V$, $\overline\alpha_V$, and $\alpha$ are
class-$\mathcal K_\infty$ functions on the relevant range. Then, for every integer
$T\geq1$,
\begin{align}
\sum_{t=0}^{T-1}\alpha(\|e(t)\|)
&\leq V(0)-V(T)
+\sum_{j=1}^{N}C_j\sum_{t=0}^{T-1}\chi_j(t)\nonumber\\
&\leq V(0)+\sum_{j=1}^{N}C_j\sum_{t=0}^{T-1}\chi_j(t).
\label{eq:finite_horizon_dissipation}
\end{align}
Consequently,
\begin{equation}
\frac{1}{T}\sum_{t=0}^{T-1}\alpha(\|e(t)\|)
\leq\frac{V(0)}{T}
+\sum_{j=1}^{N}C_j\frac{1}{T}\sum_{t=0}^{T-1}\chi_j(t).
\label{eq:average_tracking_bound}
\end{equation}
In particular, $\sum_{t=0}^{\infty}\chi_j(t)<\infty$ for every $j$ implies
$e(t)\to0$.

The additive term in this result uses the full-state displacement $\chi_j$, not the
coupling-output displacement $\mu_j$ used for constraint tightening; no bound
$\chi_j\leq\rho_j$ is assumed. If a design also imposes a full-state trust radius
$d_{H,j}(\widehat X_j,\bar X_j)\leq\rho_j^J$, then one may take
$\bar\chi_j=\sup_t\rho_j^J(t)$. Otherwise $\bar\chi_j$ is obtained from the declared
compact set or from the recorded full trajectories.

If $\chi_j(t)\leq\bar\chi_j$, define $\bar d=\sum_jC_j\bar\chi_j$,
$\psi=\alpha\circ\overline\alpha_V^{-1}$, and
$v_d=\psi^{-1}(\bar d)$. Then
\begin{equation}
\limsup_{t\to\infty}\|e(t)\|
\leq\underline\alpha_V^{-1}(v_d+\bar d).
\label{eq:explicit_ultimate_bound}
\end{equation}
If $\chi_j(t)\to0$, then $e(t)\to0$.
\end{theorem}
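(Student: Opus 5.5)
The argument is the standard shifted-candidate comparison, with the packet perturbation absorbed by Lemma~\ref{lem:cost_lipschitz}. By Theorem~\ref{thm:recursive}, the committed shift $\bar X_i(t+1)=S_i\widehat X_i(t)$ with inputs $\widetilde U_i(t+1)$ is the fallback in $\mathcal A_i(t+1)$. Hence, by \eqref{eq:accepted_value},
\[
\widehat J_i(t+1)\leq J_i\bigl(\widetilde U_i(t+1);x_i(t+1),\bar X_{\calN_i^+}(t+1)\bigr).
\]
The candidate on the right is evaluated against the new packets $\bar X_j(t+1)=S_j\widehat X_j(t)$. The old cost $\widehat J_i(t)$, however, was evaluated against $\bar X_j(t)$. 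The natural comparison object is $S_j\bar X_j(t)$: against it the packet-dependent terms of the shifted candidate coincide stage by stage with stages $1,\dots,H-1$ of $\widehat J_i(t)$, since the stage cost is time invariant (Assumption~\ref{ass:value_terminal}). I would therefore split
\[
\widehat J_i(t+1)-\widehat J_i(t)\leq A_i+B_i,
\]
where $B_i$ is the packet change from $S_j\bar X_j(t)$ to $S_j\widehat X_j(t)$, and $A_i$ is the nominal telescoping difference.

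For $B_i$, I would apply Lemma~\ref{lem:cost_lipschitz} with own trajectories fixed, then Lemma~\ref{lem:shift_regular}. This gives
\[
B_i\leq\sum_{j\in\calN_i^+}L_{ij}^J\,d_{H,j}\bigl(S_j\widehat X_j(t),S_j\bar X_j(t)\bigr)\leq\sum_{j\in\calN_i^+}L_{ij}^Jc_{S,j}\,\chi_j(t).
\]
Summing over $i$ and regrouping by $j$ yields $C_j=\sum_{i:j\in\calN_i^+}L_{ij}^Jc_{S,j}$.

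For $A_i$, the terms $k=1,\dots,H-1$ cancel and the stage-$0$ cost $\ell_i(0)$ of $\widehat J_i(t)$ is removed. What remains is
\[
V_{f,i}(F_i(z_i))-V_{f,i}(z_i)+\ell_i^T
\]
with $z_i=\widehat x_i(H|t)$, together with the appended stage evaluated against the packets $y=\bar x_j(H|t)$. The input-increment term $\|u_i(0)-u_i(-1)\|_{S_i}$ must also match. This works because the shifted candidate's $u(-1)$ is $\widehat u_i(0|t)$, the applied input. Summing over $i$, Assumption~\ref{ass:value_terminal} makes the terminal bracket nonpositive. The removed stage-$0$ costs are bounded below by $\alpha(\|e(t)\|)$ via the assumed lower bound. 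Together these give \eqref{eq:practical_decrease}.

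I expect the main obstacle to be bookkeeping rather than analysis. Every packet-dependent term, including own-shift regularizers and the appended terminal stage, must be compared against $S_j\bar X_j(t)$ versus $S_j\widehat X_j(t)$ consistently, so that the residual is exactly $\chi_j$ and not $\mu_j$. I also need to confirm that the terminal pair $(z,y)$ meets the hypotheses of Assumption~\ref{ass:value_terminal}. This holds because both terminal states lie in $\calX_{f,i}$ by the hard terminal constraint and the invariance in Assumption~\ref{ass:terminal}.

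The remaining statements are routine. Telescoping \eqref{eq:practical_decrease} from $0$ to $T-1$ gives \eqref{eq:finite_horizon_dissipation}, using $V(T)\geq0$. Dividing by $T$ gives \eqref{eq:average_tracking_bound}. If $\sum_t\chi_j(t)<\infty$, then $\sum_t\alpha(\|e(t)\|)<\infty$, so $\alpha(\|e(t)\|)\to0$ and hence $e(t)\to0$.

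For the ultimate bound, use $V\leq\overline\alpha_V(\|e\|)$ to get $\alpha(\|e\|)\geq\psi(V)$, and therefore
\[
V(t+1)\leq V(t)-\psi(V(t))+\bar d.
\]
If $V(t)>v_d$, then $V$ strictly decreases. If $V(t)\leq v_d$, then $V(t+1)\leq v_d+\bar d$. A standard case argument then gives $\limsup_t V(t)\leq v_d+\bar d$, and applying $\underline\alpha_V^{-1}$ yields \eqref{eq:explicit_ultimate_bound}.

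The final claim, that $e(t)\to0$ when $\chi_j\to0$, follows by rerunning the ultimate-bound argument on tails. For any $\varepsilon>0$, eventually $\bar d\leq\varepsilon$, so $\limsup_t\|e(t)\|\leq\underline\alpha_V^{-1}(\psi^{-1}(\varepsilon)+\varepsilon)$. Letting $\varepsilon\downarrow0$ gives the conclusion.
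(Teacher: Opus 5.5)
Your proposal is correct and follows the paper's proof essentially step for step. It uses the same intermediate packet $S_j\bar X_j(t)$ and the same split into an exact shifted-cost cancellation plus a packet perturbation bounded through Lemmas~\ref{lem:cost_lipschitz} and~\ref{lem:shift_regular}. It arrives at the same $C_j=c_{S,j}\sum_{i:j\in\calN_i^+}L_{ij}^J$ and the same telescoping and tail arguments, and your explicit slew-term and terminal-pair checks make precise details the paper leaves implicit.

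The only thin step is the ultimate bound, where ``strictly decreases'' does not by itself guarantee entry into the invariant interval $[0,v_d+\bar d]$. The paper closes this with the uniform decrement $\varepsilon_\delta=\psi(v_d+\delta)-\bar d>0$ whenever $V(t)>v_d+\delta$, and then lets $\delta\downarrow0$. A monotone-limit contradiction using the strict monotonicity of $\psi$ would work equally well.
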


\begin{revision}
\begin{proof}
Let $Y_j(t)=\bar X_j(t)$. Theorem~\ref{thm:recursive} supplies the hard-feasible shifted
input $\widetilde U_i(t+1)$ in \eqref{eq:shifted_input}. Since this input belongs to
$\mathcal A_i(t+1)$, \eqref{eq:accepted_value} and addition/subtraction of the cost
parameterized by $S Y(t)$ give
\begin{align}
\widehat J_i(t+1)-\widehat J_i(t)
\leq{}&J_i(\widetilde U_i(t+1);x_i(t+1),
S Y_{\calN_i^+}(t))-\widehat J_i(t)\nonumber\\
&+\Delta_{J,i}(t),
\label{eq:value_split}
\end{align}
where the packet-parameter perturbation is
\begin{align}
\Delta_{J,i}(t)={}&J_i(\widetilde U_i(t+1);x_i(t+1),
\bar X_{\calN_i^+}(t+1))\nonumber\\
&-J_i(\widetilde U_i(t+1);x_i(t+1),
S Y_{\calN_i^+}(t)).
\label{eq:value_packet_perturbation}
\end{align}
The input and own-state trajectories are identical in the two terms defining
$\Delta_{J,i}$; only the packet parameters change. Since
$\bar X_j(t+1)=S_j\widehat X_j(t)$ and $Y_j(t)=\bar X_j(t)$,
\begin{equation}
d_{H,j}(\bar X_j(t+1),S_jY_j(t))
\leq c_{S,j}d_{H,j}(\widehat X_j(t),\bar X_j(t))
=c_{S,j}\chi_j(t).
\label{eq:shift_packet_bound}
\end{equation}
Lemma~\ref{lem:cost_lipschitz} applied to
\eqref{eq:value_packet_perturbation} therefore yields
\begin{equation}
|\Delta_{J,i}(t)|\leq
\sum_{j\in\calN_i^+}L_{ij}^{J}c_{S,j}\chi_j(t).
\label{eq:value_perturbation}
\end{equation}

For an explicit cancellation, define
\begin{equation*}
\widehat\ell_i^t(k):=\ell_i(\widehat x_i(k|t),\widehat u_i(k|t),
Y_{\calN_i^+}(k|t)),\qquad 0\leq k<H,
\end{equation*}
and let $\ell_i^T(t)$ denote the complete appended stage generated by
$(\widehat x_i(H|t),\kappa_i(\widehat x_i(H|t)),Y_{\calN_i^+}(H|t))$. With
$F_i(x)=f_i(x,\kappa_i(x))$, the two finite sums are
\begin{align}
\widehat J_i(t)
&=\sum_{k=0}^{H-1}\widehat\ell_i^t(k)
+V_{f,i}(\widehat x_i(H|t)),\nonumber\\
J_i(\widetilde U_i(t+1);x_i(t+1),S Y_{\calN_i^+}(t))
&=\sum_{k=1}^{H-1}\widehat\ell_i^t(k)+\ell_i^T(t)
+V_{f,i}(F_i(\widehat x_i(H|t))).
\label{eq:shifted_cost_expansion}
\end{align}
Hence all inherited stages cancel exactly and
\begin{align}
&J_i(\widetilde U_i;x_i(t+1),S Y_{\calN_i^+})-\widehat J_i(t)\nonumber\\
&=-\widehat\ell_i^t(0)+\ell_i^T(t)
+V_{f,i}(F_i(\widehat x_i(H|t)))-V_{f,i}(\widehat x_i(H|t)).
\label{eq:exact_cost_cancellation}
\end{align}
Summing \eqref{eq:exact_cost_cancellation}, applying
\eqref{eq:network_terminal_decrease} with
$z_i=\widehat x_i(H|t)$ and $y_j=Y_j(H|t)$, and using
$\widehat\ell_i^t(0)\geq\ell_i^0(x_i(t),u_i(t))$ gives
\begin{align}
&\sum_i\!\left[J_i(\widetilde U_i;x_i(t+1),S Y_{\calN_i^+})
-\widehat J_i(t)\right]\nonumber\\
&\hspace{3em}\leq-\sum_i\ell_i^0(x_i(t),u_i(t))
\leq-\alpha(\|e(t)\|).
\label{eq:nominal_shift_decrease}
\end{align}
Combining \eqref{eq:value_split}, \eqref{eq:value_perturbation}, and
\eqref{eq:nominal_shift_decrease}, then exchanging the two finite sums, proves
\eqref{eq:practical_decrease} with
\begin{equation}
C_j=c_{S,j}\sum_{i:j\in\calN_i^+}L_{ij}^{J}.
\label{eq:explicit_Cj}
\end{equation}

Summing \eqref{eq:practical_decrease} from $t=0$ to $T-1$ telescopes the value terms:
\begin{align*}
\sum_{t=0}^{T-1}\alpha(\|e(t)\|)
&\leq V(0)-V(T)+\sum_{t=0}^{T-1}\sum_{j=1}^{N}C_j\chi_j(t)\\
&\leq V(0)+\sum_{j=1}^{N}C_j\sum_{t=0}^{T-1}\chi_j(t),
\end{align*}
because $V(T)\geq\underline\alpha_V(\|e(T)\|)\geq0$. This proves
\eqref{eq:finite_horizon_dissipation}; division by $T$ gives
\eqref{eq:average_tracking_bound}. If every full-state packet-displacement sequence is summable, the
right-hand side of \eqref{eq:finite_horizon_dissipation} remains finite as
$T\to\infty$. Hence $\sum_t\alpha(\|e(t)\|)<\infty$, which implies
$\alpha(\|e(t)\|)\to0$ and therefore $e(t)\to0$.

For the stability conclusion, put $d_t=\sum_jC_j\chi_j(t)$. Since
$V(t)\leq\overline\alpha_V(\|e(t)\|)$,
\[
\alpha(\|e(t)\|)
\geq\alpha(\overline\alpha_V^{-1}(V(t)))=\psi(V(t)),
\]
and \eqref{eq:practical_decrease} implies
\begin{equation}
V(t+1)\leq V(t)-\psi(V(t))+d_t.
\label{eq:scalar_comparison}
\end{equation}
If $d_t\leq\bar d$, fix $\delta>0$ and define
\begin{equation*}
\varepsilon_\delta:=\psi(v_d+\delta)-\bar d>0,
\qquad T_\delta:=\inf\{t\geq0:V(t)\leq v_d+\delta\}.
\end{equation*}
For $0\leq t<T_\delta$, \eqref{eq:scalar_comparison} gives
\begin{equation}
V(t+1)\leq V(t)-\varepsilon_\delta,\qquad
T_\delta\leq
\left\lceil\frac{[V(0)-v_d-\delta]_+}{\varepsilon_\delta}\right\rceil.
\label{eq:hitting_time}
\end{equation}
Put $a_\delta=v_d+\delta$, $b_\delta=a_\delta+\bar d$. The interval
$\mathcal I_\delta=[0,b_\delta]$ is positively invariant because
\begin{equation}
V(t+1)\leq
\begin{cases}
V(t)+\bar d\leq b_\delta,&V(t)\leq a_\delta,\\
V(t)-\varepsilon_\delta\leq b_\delta,&a_\delta<V(t)\leq b_\delta.
\end{cases}
\label{eq:comparison_invariance}
\end{equation}
Thus $\limsup_tV(t)\leq v_d+\bar d$ after letting $\delta\downarrow0$, and the lower
comparison bound proves \eqref{eq:explicit_ultimate_bound}. If $d_t\to0$, set
$\bar d_T=\sup_{t\geq T}d_t$. Applying the same calculation on every tail gives
\begin{equation*}
\limsup_{t\to\infty}V(t)
\leq\psi^{-1}(\bar d_T)+\bar d_T\quad\forall T,
\qquad \bar d_T\downarrow0,
\end{equation*}
so $V(t)\to0$ and $e(t)\to0$.
No slack term appears because \eqref{eq:hard_terminal} and all certified pairwise
constraints are hard.
\end{proof}
\end{revision}

\begin{corollary}
\label{cor:terminal_residual}
If the complete network terminal inequality in
Assumption~\ref{ass:value_terminal} is not imposed pointwise, define the directly
evaluable terminal-shift residual
\begin{align}
r_T(t):=\sum_{i=1}^{N}\bigl[&\ell_i^T(t)
+V_{f,i}(F_i(\widehat x_i(H|t)))\nonumber\\
&-V_{f,i}(\widehat x_i(H|t))\bigr],
\qquad \zeta(t):=[r_T(t)]_+.
\label{eq:terminal_shift_residual}
\end{align}
Let $d_J(t):=\sum_i|\Delta_{J,i}(t)|$. The candidate decomposition gives
\begin{align}
V(t+1)-V(t)
&\leq-\alpha(\|e(t)\|)+d_J(t)+\zeta(t)\nonumber\\
&\leq-\alpha(\|e(t)\|)+\sum_{j=1}^{N}C_j\chi_j(t)+\zeta(t).
\label{eq:practical_decrease_residual}
\end{align}
All accumulated and ultimate-bound conclusions of
Theorem~\ref{thm:decrease} hold after replacing
$d_t=\sum_jC_j\chi_j(t)$ by either $d_J(t)+\zeta(t)$ or
$\sum_jC_j\chi_j(t)+\zeta(t)$. Under
Assumption~\ref{ass:value_terminal}, $\zeta(t)=0$ and
\eqref{eq:practical_decrease_residual} reduces to
\eqref{eq:practical_decrease}.
\end{corollary}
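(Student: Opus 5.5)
The argument reruns the proof of Theorem~\ref{thm:decrease} without invoking Assumption~\ref{ass:value_terminal}. That assumption is used only to discard the terminal contribution, so here that contribution is kept explicitly and bounded by its positive part.

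\emph{Step 1 (split the value).} Recursive feasibility (Theorem~\ref{thm:recursive}) does not use Assumption~\ref{ass:value_terminal}. Hence the shifted candidate $\widetilde U_i(t+1)$ in \eqref{eq:shifted_input} still belongs to $\mathcal A_i(t+1)$. Then \eqref{eq:accepted_value} gives the split \eqref{eq:value_split} with the packet perturbation $\Delta_{J,i}(t)$ of \eqref{eq:value_packet_perturbation}, exactly as before.

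\emph{Step 2 (exact stage cancellation).} The shifted-cost expansion \eqref{eq:shifted_cost_expansion} and the cancellation \eqref{eq:exact_cost_cancellation} are purely algebraic and hold unchanged. Summing over $i$ gives
\[
\sum_i\bigl[J_i(\widetilde U_i;x_i(t+1),SY_{\calN_i^+})-\widehat J_i(t)\bigr]
=-\sum_i\widehat\ell_i^t(0)+r_T(t).
\]
Here I check that the appended stage $\ell_i^T(t)$ and the terminal difference are exactly the summands in \eqref{eq:terminal_shift_residual}. Using $r_T(t)\le\zeta(t)$ and $\widehat\ell_i^t(0)\ge\ell_i^0(x_i(t),u_i(t))$, the network stage-cost lower bound gives the bound $-\alpha(\|e(t)\|)+\zeta(t)$.

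\emph{Step 3 (combine and bound the perturbation).} Adding $\sum_i\Delta_{J,i}(t)\le d_J(t)$ to Step 2 gives the first inequality in \eqref{eq:practical_decrease_residual}. For the second inequality, I would invoke \eqref{eq:value_perturbation}, which rests only on Lemmas~\ref{lem:cost_lipschitz} and \ref{lem:shift_regular}. Exchanging the finite sums then yields $d_J(t)\le\sum_jC_j\chi_j(t)$, with $C_j$ as in \eqref{eq:explicit_Cj}.

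\emph{Step 4 (transfer the downstream conclusions).} I would observe that every downstream argument in the proof of Theorem~\ref{thm:decrease} uses only the scalar inequality $V(t+1)-V(t)\le-\alpha(\|e(t)\|)+d_t$ with some $d_t\ge0$. This covers the telescoping \eqref{eq:finite_horizon_dissipation}, the averaging, summability, the comparison \eqref{eq:scalar_comparison}, the hitting time, the invariance of $[0,b_\delta]$, and the tail argument. Both proposed replacements for $d_t$ are nonnegative, since $d_J$ is a sum of absolute values and $\zeta,\chi_j\ge0$. So those arguments apply verbatim, with $\bar d$ and summability hypotheses now imposed on the new $d_t$.

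\emph{Step 5 (consistency).} Under Assumption~\ref{ass:value_terminal}, \eqref{eq:network_terminal_decrease} with $z_i=\widehat x_i(H|t)$ and $y=Y(H|t)$ is precisely $r_T(t)\le0$. Hence $\zeta(t)=0$, and \eqref{eq:practical_decrease_residual} reduces to \eqref{eq:practical_decrease}.

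\textbf{Main obstacle.} I expect this to be Step 2's bookkeeping: one must confirm that $\ell_i^T(t)$ in \eqref{eq:terminal_shift_residual} is evaluated with the packet $SY(t)$, the same parameter as in the candidate cost. Otherwise a mismatch term would appear outside both $\Delta_{J,i}$ and $r_T$. To settle this, I would follow the definition given just before \eqref{eq:shifted_cost_expansion}, which uses $Y_{\calN_i^+}(H|t)$, and check that this is the same packet parameter used in \eqref{eq:terminal_shift_residual}.
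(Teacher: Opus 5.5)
Your proposal is correct and follows the paper's proof step for step. You keep the summed cancellation \eqref{eq:exact_cost_cancellation} as the dropped first stage plus $r_T(t)\leq\zeta(t)$, bound $\sum_i\Delta_{J,i}(t)$ by $d_J(t)$ and then by \eqref{eq:value_perturbation}, and note that the telescoping and comparison arguments use only a nonnegative additive sequence $d_t$. One small precision: the corollary relaxes only the inequality \eqref{eq:network_terminal_decrease}, not all of Assumption~\ref{ass:value_terminal}, and its time-invariant stage-cost clause is what makes \eqref{eq:shifted_cost_expansion} valid, so say you retain that clause rather than that the assumption is never invoked.
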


\begin{proof}
Before invoking \eqref{eq:network_terminal_decrease}, summing
\eqref{eq:exact_cost_cancellation} gives the dropped first stage plus $r_T(t)$.
Using $r_T(t)\leq\zeta(t)$ and
$\sum_i\Delta_{J,i}(t)\leq d_J(t)$ proves the first inequality in
\eqref{eq:practical_decrease_residual}; the packet bound
\eqref{eq:value_perturbation} proves the second. The remaining arguments depend only on the
nonnegative additive sequence $d_t$, so the same telescoping and comparison proofs
apply.
\end{proof}

\begin{corollary}
\label{cor:ultimate_bound}
Under Theorem~\ref{thm:decrease}, suppose $\chi_j(t)\leq\bar\chi_j$. If a desired
radius $r_b$ in the local comparison region satisfies
\begin{equation}
\psi^{-1}\!\left(\sum_{j=1}^{N}C_j\bar\chi_j\right)
+\sum_{j=1}^{N}C_j\bar\chi_j
\leq\underline\alpha_V(r_b),
\label{eq:budget_to_bound}
\end{equation}
then $\limsup_{t\to\infty}\|e(t)\|\leq r_b$ for trajectories that remain in that
local sublevel set.
\end{corollary}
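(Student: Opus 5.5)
The corollary follows from the ultimate bound \eqref{eq:explicit_ultimate_bound} in Theorem~\ref{thm:decrease} together with monotonicity of the comparison functions. Under $\chi_j(t)\leq\bar\chi_j$, put $\bar d=\sum_jC_j\bar\chi_j$. Then $d_t=\sum_jC_j\chi_j(t)\leq\bar d$ for every $t$, because each $C_j\geq0$. This is exactly the hypothesis used in the comparison argument built on \eqref{eq:scalar_comparison}.

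First, I invoke that argument (the hitting-time bound \eqref{eq:hitting_time} and the invariance of $[0,b_\delta]$ in \eqref{eq:comparison_invariance}) to obtain $\limsup_{t\to\infty}V(t)\leq v_d+\bar d$, where $v_d=\psi^{-1}(\bar d)$. This step needs the trajectory to stay in the local sublevel set where the sandwich bounds $\underline\alpha_V\leq V\leq\overline\alpha_V$ hold. The corollary assumes this explicitly, so the theorem's hypotheses are met.

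Second, hypothesis \eqref{eq:budget_to_bound} reads $v_d+\bar d\leq\underline\alpha_V(r_b)$. Combining this with the lower comparison bound $\underline\alpha_V(\|e(t)\|)\leq V(t)$ gives
\[
\limsup_{t\to\infty}\underline\alpha_V(\|e(t)\|)\leq\limsup_{t\to\infty}V(t)\leq\underline\alpha_V(r_b).
\]
Since $\underline\alpha_V$ is class-$\mathcal K_\infty$, it is continuous and strictly increasing, so its inverse preserves order and commutes with $\limsup$. Applying $\underline\alpha_V^{-1}$ therefore yields $\limsup_t\|e(t)\|\leq r_b$.

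The one point needing care is the strict positivity $\varepsilon_\delta=\psi(v_d+\delta)-\bar d>0$ used in the theorem. It requires $\psi$ to be strictly increasing with $\bar d$ in its range, so that $\psi^{-1}(\bar d)$ is well defined. Both hold because $\psi=\alpha\circ\overline\alpha_V^{-1}$ is a composition of class-$\mathcal K_\infty$ functions on the relevant range. The requirement that $r_b$ lie in the local comparison region guarantees that these range conditions apply, so no new estimate is needed beyond Theorem~\ref{thm:decrease}.
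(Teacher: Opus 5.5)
Your proposal is correct and follows essentially the same route as the paper. Like the paper, you set $\bar d=\sum_jC_j\bar\chi_j$, take $\limsup_tV(t)\leq\psi^{-1}(\bar d)+\bar d$ from the comparison argument of Theorem~\ref{thm:decrease}, and transfer this bound to $\|e(t)\|$ via $\underline\alpha_V(\|e\|)\leq V$ and the monotonicity of $\underline\alpha_V^{-1}$; your added remarks (that $C_j\geq0$ gives $d_t\leq\bar d$, and that continuity of $\underline\alpha_V^{-1}$ lets it pass through the $\limsup$) only make explicit steps the paper leaves implicit.
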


\begin{proof}
Let $\bar d=\sum_jC_j\bar\chi_j$. Theorem~\ref{thm:decrease} gives
$\limsup_tV(t)\leq\psi^{-1}(\bar d)+\bar d$. Since
$\underline\alpha_V(\|e(t)\|)\leq V(t)$ and
$\underline\alpha_V^{-1}$ is increasing,
\[
\limsup_{t\to\infty}\|e(t)\|
\leq\underline\alpha_V^{-1}
\!\left(\psi^{-1}(\bar d)+\bar d\right)
\leq r_b,
\]
where the last inequality is exactly \eqref{eq:budget_to_bound}.
\end{proof}

\begin{corollary}
\label{cor:quadratic_budget_bound}
Suppose the local comparison bounds in Theorem~\ref{thm:decrease} can be chosen as
\begin{equation}
\underline c\|e\|^2\leq V\leq\overline c\|e\|^2,
\qquad
\sum_i\ell_i^0(x_i,u_i)\geq a\|e\|^2
\label{eq:quadratic_comparison_bounds}
\end{equation}
with $\underline c,\overline c,a>0$. For
$\bar d=\sum_jC_j\bar\chi_j$, one has
\begin{align}
\Delta V(t)&\leq-a\|e(t)\|^2+\bar d,
\label{eq:quadratic_practical_decrease}\\
\limsup_{t\to\infty}\|e(t)\|
&\leq
\sqrt{\frac{1+\overline c/a}{\underline c}\,\bar d}.
\label{eq:quadratic_ultimate_bound}
\end{align}
Hence $V$ decreases strictly outside $\|e\|\leq\sqrt{\bar d/a}$, and the sufficient
full-packet displacement condition
\begin{equation}
\sum_{j=1}^{N}C_j\bar\chi_j
\leq\frac{\underline c\,a}{a+\overline c}\,r_b^2
\label{eq:quadratic_budget_design}
\end{equation}
is sufficient for $\limsup_t\|e(t)\|\leq r_b$.
\end{corollary}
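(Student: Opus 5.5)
The plan is to specialize Theorem~\ref{thm:decrease} and Corollary~\ref{cor:ultimate_bound} to quadratic comparison functions, i.e.\ to take $\alpha(s)=as^2$, $\underline\alpha_V(s)=\underline c s^2$ and $\overline\alpha_V(s)=\overline c s^2$, and then compute every inverse explicitly.

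\textbf{Step 1: the practical decrease.} Start from \eqref{eq:practical_decrease}. The nominal stage-cost bound used in its proof is $\sum_i\ell_i^0\geq a\|e\|^2$, so $\alpha(\|e\|)$ may be replaced by $a\|e\|^2$. Bounding $\chi_j\leq\bar\chi_j$ gives \eqref{eq:quadratic_practical_decrease} at once. The strict-decrease statement is immediate: if $\|e\|^2>\bar d/a$, the right-hand side of \eqref{eq:quadratic_practical_decrease} is negative.

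\textbf{Step 2: the ultimate bound.} Compute $\psi=\alpha\circ\overline\alpha_V^{-1}$. Since $\overline\alpha_V^{-1}(v)=\sqrt{v/\overline c}$, we get $\psi(v)=(a/\overline c)v$ and $\psi^{-1}(d)=(\overline c/a)d$. Corollary~\ref{cor:ultimate_bound}, equivalently the comparison argument ending in \eqref{eq:explicit_ultimate_bound}, then yields
\[
\limsup_t V(t)\leq \psi^{-1}(\bar d)+\bar d=(1+\overline c/a)\,\bar d .
\]
Applying $\underline\alpha_V^{-1}(v)=\sqrt{v/\underline c}$ gives exactly \eqref{eq:quadratic_ultimate_bound}.

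\textbf{Step 3: the design condition.} Require the right-hand side of \eqref{eq:quadratic_ultimate_bound} to be at most $r_b$. Squaring and rearranging gives
\[
\bar d\leq \frac{\underline c\, r_b^2}{1+\overline c/a}=\frac{\underline c\, a}{a+\overline c}\,r_b^2,
\]
which is \eqref{eq:quadratic_budget_design}. This is also precisely condition \eqref{eq:budget_to_bound} with the quadratic functions substituted.

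\textbf{Main obstacle.} The only delicate point is the hypotheses of Theorem~\ref{thm:decrease}. The comparison functions there are stated as class-$\mathcal K_\infty$ on the relevant range, and the bounds hold on a forward-invariant local sublevel set. Quadratics are $\mathcal K_\infty$, so the inverses are valid. I would also note that the conclusion applies only to trajectories that remain in that local sublevel set, inheriting this restriction from Corollary~\ref{cor:ultimate_bound}.

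A minor check is that $\psi$ is linear. Then $\varepsilon_\delta=\psi(v_d+\delta)-\bar d=(a/\overline c)\delta>0$, so the hitting-time argument \eqref{eq:hitting_time} and the invariance argument \eqref{eq:comparison_invariance} go through unchanged.
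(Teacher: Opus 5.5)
Your proposal is correct and follows the paper's proof essentially step for step: the paper also takes $\alpha(r)=ar^2$, $\underline\alpha_V(r)=\underline c r^2$, and $\overline\alpha_V(r)=\overline c r^2$, computes $\psi(v)=(a/\overline c)v$ and $v_d=(\overline c/a)\bar d$, substitutes these into \eqref{eq:explicit_ultimate_bound}, and rearranges to obtain \eqref{eq:quadratic_budget_design}. Your extra check that $\varepsilon_\delta=(a/\overline c)\delta>0$ and your remark that the bound applies only to trajectories remaining in the local sublevel set are consistent with the paper's argument, although the paper does not state either explicitly.
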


\begin{proof}
Choose $\alpha(r)=ar^2$, $\underline\alpha_V(r)=\underline c r^2$, and
$\overline\alpha_V(r)=\overline c r^2$. Then
\begin{equation*}
\psi(v)=\alpha\!\left(\overline\alpha_V^{-1}(v)\right)
=\frac{a}{\overline c}v,
\qquad
v_d=\psi^{-1}(\bar d)=\frac{\overline c}{a}\bar d.
\end{equation*}
Equation~\eqref{eq:quadratic_practical_decrease} follows directly from
\eqref{eq:practical_decrease}. Substitution of $v_d$ into
\eqref{eq:explicit_ultimate_bound} gives
\[
\underline\alpha_V^{-1}(v_d+\bar d)
=\sqrt{\frac{(\overline c/a+1)\bar d}{\underline c}},
\]
which proves \eqref{eq:quadratic_ultimate_bound}. Requiring its right-hand side not to
exceed $r_b$ and rearranging yields \eqref{eq:quadratic_budget_design}.
\end{proof}
\end{revision}
\endgroup

\section{Spacecraft Formation Study}

\subsection{Model, geometry, and reproducible constants}

Each spacecraft state is $(p_i,v_i,R_i,\Omega_i)\in\R^3\times\R^3\times\SO\times
\R^3$. With sample time $h$, Hill--Clohessy--Wiltshire acceleration
$a_{\rm HCW}$ \cite{ClohessyWiltshire1960}, diagonal inertia $J$, commanded
acceleration $a_i$, and body torque $\tau_i$, the implemented update is
\begin{align}
v_i^+&=v_i+h(a_{\rm HCW}(p_i,v_i)+a_i),\nonumber\\
p_i^+&=p_i+hv_i^+,\label{eq:hcw}\\
\Omega_i^+&=\Omega_i+hJ^{-1}(\tau_i-\Omega_i\times J\Omega_i),\nonumber\\
R_i^+&=R_i\Exp(h\Omega_i^+).
\label{eq:attitude_update}
\end{align}
Writing $p_i=(p_{x,i},p_{y,i},p_{z,i})$ and
$v_i=(v_{x,i},v_{y,i},v_{z,i})$, the acceleration in \eqref{eq:hcw} is
\begin{equation}
a_{\rm HCW}(p_i,v_i)=
\begin{bmatrix}
3n^2p_{x,i}+2nv_{y,i} & -2nv_{x,i} & -n^2p_{z,i}
\end{bmatrix}^{\!\top}.
\label{eq:hcw_acceleration}
\end{equation}
Angular velocity is expressed in the body frame and multiplies $R_i$ on the right.
The hat map is included in the definition of $\Exp$. Using $\Omega_i^+$ in the group
step follows the same velocity-first semi-implicit Euler ordering as the translation
update in \eqref{eq:hcw}; this ordering is a discretization choice, not an assumption of
the certificate argument. Right multiplication through $\Exp$ preserves
$R_i^\top R_i=I$ exactly in exact arithmetic. The method is first order, not an LGVI;
a higher-order or variational integrator can replace it without changing the metric
certificate argument \cite{NordkvistSanyal2010}.

For $N$ agents, the desired positions are
\begin{equation}
p_{i,d}=r_f\begin{bmatrix}
\cos\phi_i & \sin\phi_i & 0.15\sin(2\phi_i)
\end{bmatrix}^{\!\top},\qquad
\phi_i=\frac{2\pi(i-1)}{N}.
\label{eq:desired_formation}
\end{equation}
Each desired attitude points the body first axis toward the target. More precisely, with
$b_{1,i}=(p_T-p_{i,d})/\|p_T-p_{i,d}\|$, a fixed inertial reference not parallel to
$b_{1,i}$ is projected to construct $b_{3,i}$, then
$b_{2,i}=b_{3,i}\times b_{1,i}$ and
$R_{i,d}=[b_{1,i}\ b_{2,i}\ b_{3,i}]$. The desired relative attitude in
\eqref{eq:gatt} is $R_{ij}^{d}=R_{i,d}^\top R_{j,d}$. This makes the rotational state
enter not only an individual pointing cost but also a hard neighbor-dependent manifold
constraint.

\begin{revision}
Besides the pairwise constraints \eqref{eq:gcol}--\eqref{eq:gatt}, each prediction obeys
\begin{align}
\|\Omega_i\|&\leq1.3,\qquad |p_i|\leq(5.8,5.8,4.8)^\top,\nonumber\\
\|p_i-p_T\|&\geq0.45,qquad
d_R(R_{i,d},R_i)\leq\theta_{\rm chart}=3.0<\pi,
\label{eq:spacecraft_state_constraints}
\end{align}
and the input boxes stated below. These local constraints are checked at all prediction
stages but do not require neighbor-budget tightening.

The three-agent main case uses a complete graph, $H=8$, $h=0.14$, mean motion
$0.012$, $d_{\rm safe}=0.70$, $d_{\rm comm}=2.81$,
$\theta_{\max}=0.40$ rad, and $\ell_R=2.0$. Consequently,
\begin{equation}
L^{\rm col}=L^{\rm com}=1,\qquad L^R=0.5,
\end{equation}
and $\epsilon_{ij}^q=10^{-5}$. Input limits are
$|a_{i,l}|\leq0.36$ and $|\tau_{i,l}|\leq0.95$. The main adaptive run uses
$\rho(0)=0.05$, $\rho_{\min}=0.002$, $\delta_\rho=0.001$, and seed 11. SLSQP uses
40 major iterations and function tolerance $10^{-6}$. With final hard-constraint tolerance
$\tau_{\rm num}=10^{-6}$ and $L_{\max}:=\max_{i,j,q}L_{ij}^q=1$, the tightened and trust checks leave the
pairwise offset $\epsilon_{ij}^q-(1+L_{\max})\tau_{\rm num}=8\times10^{-6}>0$;
shifted reserves are rechecked after rollout.

The initial states are generated reproducibly as
\begin{align}
p_i(0)&=s_0p_{i,d}+b_0+\sigma_p z_{p,i},&
v_i(0)&=\sigma_v z_{v,i},\nonumber\\
R_i(0)&=R_{i,d}\Exp(\delta\theta_i),&
\Omega_i(0)&=\sigma_\Omega z_{\Omega,i},
\label{eq:initial_conditions}
\end{align}
where the $z$ vectors are fixed-seed standard normal samples and
$\|\delta\theta_i\|$ is bounded by the reported initial-angle limit. In the main case,
$s_0=1$, $b_0=(0.05,-0.03,0.02)$, $(\sigma_p,\sigma_v,\sigma_\Omega)=
(0.05,0.02,0.03)$, and the attitude limit is $0.15$ rad. Twelve closed-loop samples
are used. The challenge changes only the parameters stated in Section~\ref{sec:challenge},
uses smaller initialization noise, and is simulated for 16 samples.

The time-invariant stage weights for position, velocity, attitude, and angular velocity
are $(3.0,2.2,6.0,0.5)$; relative position and attitude use $(6.5,0.2)$.
Acceleration, torque, and input slew use $(0.08,0.05,0.12)$, while the own-shift
regularizer uses $(0.8,0.15)$ for position and attitude. Relative-velocity, homotopy,
overshoot, and soft geometric penalties are disabled; the corresponding geometric
conditions are imposed as hard constraints. Configuration files, seeds, per-step logs,
and scripts are included in the source archive.

The terminal matrix $P$ and feedback $K$ are obtained from the local 12-dimensional
position--velocity--log-attitude--angular-velocity linearization and the discrete Riccati
equation. The implemented terminal controller is the equilibrium feedforward plus
$-Ke$, projected onto the hard input box. With $P_i=3.5P_{\rm DARE}$, the terminal
cost is $e^\top P_i e$ and the hard terminal set is $e^\top P_i e\leq0.15$.
No terminal, trust, collision, or
communication slack is permitted.

Let $A_i,B_i$ denote the exact equilibrium Jacobians of the optimizer's sampled error
map; the numerical design approximates them by centered differences with step $10^{-6}$. Thus
$e_i^+=A_ie_i+B_i\delta u_i+O(\|(e_i,\delta u_i)\|^2)$. Let
$Q_i^{\rm D}=Q_i+\Delta Q_i$, $\Delta Q_i\succeq0$, and $R_i^u$ be the terminal-design
weights; the reported $\Delta Q_i$ contains nonnegative relative-state weights. Set
$\ell_i^{\rm D}(e_i,\delta u_i)=e_i^\top Q_i^{\rm D}e_i+
\delta u_i^\top R_i^u\delta u_i$, so $\ell_i^0\leq\ell_i^{\rm D}$. The Riccati solution is
\begin{align}
P_{\rm DARE}={}&Q_i^{\rm D}+A_i^\top P_{\rm DARE}A_i\nonumber\\
&-A_i^\top P_{\rm DARE}B_i
(R_i^u+B_i^\top P_{\rm DARE}B_i)^{-1}
B_i^\top P_{\rm DARE}A_i,
\label{eq:dare}
\end{align}
with
\begin{equation}
K_i=(R_i^u+B_i^\top P_{\rm DARE}B_i)^{-1}B_i^\top P_{\rm DARE}A_i,
\qquad P_i=3.5P_{\rm DARE}.
\label{eq:terminal_lqr}
\end{equation}
Writing $A_{K,i}=A_i-B_iK_i$ and substituting \eqref{eq:terminal_lqr} into
\eqref{eq:dare} yields the closed-loop identity
\begin{equation}
A_{K,i}^\top P_{\rm DARE}A_{K,i}-P_{\rm DARE}
=-(Q_i^{\rm D}+K_i^\top R_i^uK_i).
\label{eq:dare_decrease_identity}
\end{equation}
Consequently the scaled linear terminal pair has the strict surplus
\begin{align}
&V_{f,i}(A_{K,i}e_i)-V_{f,i}(e_i)
+\ell_i^{\rm D}(e_i,-K_ie_i)\nonumber\\
&\qquad=-2.5e_i^\top(Q_i^{\rm D}+K_i^\top R_i^uK_i)e_i\leq0.
\label{eq:linear_terminal_surplus}
\end{align}
An analytic local radius follows from the nonlinear remainder. Let
$\gamma=3.5$, $W_i=Q_i^{\rm D}+K_i^\top R_i^uK_i$, and, on a ball where the unprojected
feedback is input-admissible, write the exact sampled error map as
\begin{equation}
e_i^+=A_{K,i}e_i+\varphi_i(e_i),\qquad
\|\varphi_i(e_i)\|\leq c_{\varphi,i}\|e_i\|^2.
\label{eq:terminal_remainder}
\end{equation}
Using \eqref{eq:dare_decrease_identity} without dropping the remainder gives, with
\begin{equation*}
a_i=2\gamma\|A_{K,i}^\top P_{\rm DARE}\|c_{\varphi,i},
\qquad b_i=\gamma\|P_{\rm DARE}\|c_{\varphi,i}^2,
\end{equation*}
\begin{align}
&V_{f,i}(e_i^+)-V_{f,i}(e_i)+\ell_i^{\rm D}(e_i,-K_ie_i)\nonumber\\
&=-2.5e_i^\top W_ie_i
+2\gamma e_i^\top A_{K,i}^\top P_{\rm DARE}\varphi_i
+\gamma\varphi_i^\top P_{\rm DARE}\varphi_i\nonumber\\
&\leq-\bigl[2.5\lambda_{\min}(W_i)-a_ir-b_ir^2\bigr]\|e_i\|^2
\label{eq:nonlinear_terminal_bound}
\end{align}
for $\|e_i\|\leq r$. A sufficient condition for $\ell_i^{\rm D}$ decrease, and hence
for the $\ell_i^0$ decrease in Assumption~\ref{ass:terminal}, is
\begin{equation}
a_ir+b_ir^2
\leq2.5\lambda_{\min}(W_i).
\label{eq:terminal_radius_condition}
\end{equation}
The implemented law is
$\kappa_i(x_i)=\Pi_{\calU_i}(u_{i,d}-K_ie_i)$, where
$u_{i,d}=\col(-a_{\rm HCW}(p_{i,d},0),0)$ and $\Pi_{\calU_i}$ is componentwise
projection. Equation~\eqref{eq:linear_terminal_surplus} explains the $3.5$ scaling but
does not replace the nonlinear check: Assumption~\ref{ass:terminal} concerns the projected
nonlinear law, and the separate verification below evaluates the exact sampled map after
projection. The complete network terminal inequality, including relative terms, remains
the analytical condition in Assumption~\ref{ass:value_terminal}.
\end{revision}

\subsection{Main comparison}

Four methods use identical models, initial conditions, constraints, costs, horizon, warm
start, and solver tolerance: the proposed rule \eqref{eq:data_budget}; a five-sample
envelope using $\max_{t-5\leq s<t}\mu_i(s)$; a fixed $\rho_i=0.05$; and a
trajectory-only controller without a Lipschitz margin. The last has no residual-certificate
premise. Each solve checks the optimizer, committed shift, and terminal warm start with
zero slacks; among candidates satisfying every hard inequality within $10^{-6}$,
the smallest-cost one is accepted. Solver termination and certification are logged
separately.

\begin{revision}
In the numerical tables, $\bar n_{\rm it}$ denotes the mean solver iteration count and
$\mathcal R_{\min}:=\min_t\mathcal R(t)$. The column ``cert.'' records whether all
reported hard checks are satisfied.
\end{revision}

\begin{table}[t]
\color{revisionblue}
\centering
\caption{Main case ($N=3$, $H=8$). Here $\mathcal R_{\min}$ is the minimum
shifted-fallback reserve, and ``cert.'' requires all hard, packet, terminal, model,
actual-edge, and fallback checks.}
\label{tab:main_comparison}
\scriptsize
\setlength{\tabcolsep}{2.5pt}
\begin{tabular}{@{}lcccccc@{}}
\toprule
Method & $\bar\rho$ & $\bar n_{\rm it}$ & $e_p(T)$ & $e_R(T)$ & $\mathcal R_{\min}$ & cert. \\
\midrule
Adaptive & 0.0090 & 8.78 & 0.0035 & 0.0019 & 0.0709 & yes \\
Windowed & 0.0189 & 8.58 & 0.0035 & 0.0019 & 0.0709 & yes \\
Fixed & 0.0500 & 8.50 & 0.0035 & 0.0019 & 0.0427 & yes \\
Trajectory-only & -- & 8.50 & 0.0035 & 0.0019 & -- & no \\
\bottomrule
\end{tabular}

\end{table}

\begin{figure}[t]
\color{revisionblue}
\centering
\includegraphics[width=0.92\columnwidth]{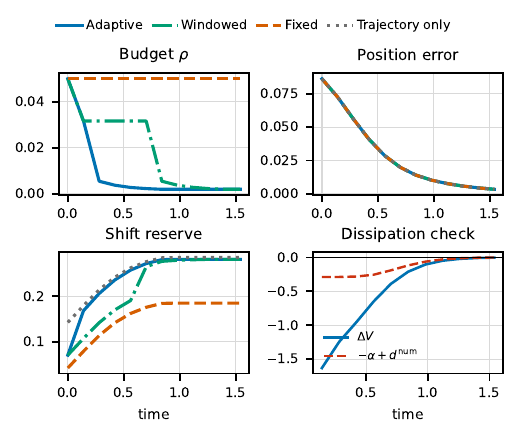}
\caption{Main case. The panels show the announced budget, mean position error,
shifted-fallback reserve, and the adaptive-run dissipation verification
$\Delta V\leq-\alpha+d^{\rm num}$.}
\label{fig:main_comparison}
\end{figure}

\begin{revision}
Table~\ref{tab:main_comparison} and Fig.~\ref{fig:main_comparison} show similar tracking,
as expected in this non-active case. The adaptive mean budget is $0.00901$, $52.3\%$
below the five-sample envelope and $82.0\%$ below the fixed radius. Every certificate
passes, all nominal one-step errors are zero, $\mu_i(t)\leq\rho_i(t)$, and the maximum
terminal-set residual is $-0.1221$. The full-state displacement $\chi_i$, stored in
the logs as \texttt{mu\_cost}, has maximum $0.1049$; it is used only in the value
analysis, not in the pairwise tightening.

The fourth panel verifies Corollary~\ref{cor:terminal_residual}. For
each transition, the shift is reevaluated with new and shifted-old packets, giving
$d^{\rm num}=\sum_i|\Delta_{J,i}|+\zeta$. Over 11 transitions, the maxima of the packet
term, $\zeta$, and $d^{\rm num}$ are $0.04280$, $0.00717$, and $0.04945$. The identity
error is below $3.4\times10^{-16}$ and
$\max(\Delta V+\alpha-d^{\rm num})=-1.79\times10^{-3}$. This is transition-by-transition finite-run
evidence, not a proof. Since $\alpha\geq0.5\|e\|^2$, the associated strict-decrease
threshold is $\sqrt{0.04945/0.5}=0.3145$.
\end{revision}

\subsection{Boundary challenge and adaptive baseline}
\label{sec:challenge}

\begin{revision}
The challenge uses formation radius $1.55$, $d_{\rm comm}=2.85$, and initial scale
$0.95$. The adaptive rule starts at $\rho=0.03$; the fixed rule is an uncapped offline
stress radius $0.08$, while Table~\ref{tab:main_comparison} retains the matched fixed
baseline. Both find hard-feasible optimizers, but the fixed budget exceeds its cap at 14
instants. Hence $\mathcal R_{\min}<0$ means loss of the stored fallback certificate, not
infeasibility of the new optimizer.
\end{revision}

\begin{table}[t]
\color{revisionblue}
\centering
\caption{Boundary case ($N=3$, $H=6$, $d_{\rm comm}=2.85$). The last column tests
the stored shift when all new NLP solutions are unavailable by construction at zero-based
step $t=2$.}
\label{tab:challenge}
\footnotesize
\setlength{\tabcolsep}{3pt}
\begin{tabular}{@{}lccccc@{}}
\toprule
Method & $\bar\rho$ & $\mathcal R_{\min}$ & $\bar J$ & $e_p(T)$ & shift at $t=2$ \\
\midrule
Adaptive & 0.0049 & 0.1219 & 0.4122 & $3.18\times10^{-5}$ & pass \\
Fixed & 0.0800 & -0.0020 & 0.4237 & $2.43\times10^{-3}$ & fail \\
\bottomrule
\end{tabular}

\end{table}

\begin{figure}[t]
\color{revisionblue}
\centering
\includegraphics[width=0.91\columnwidth]{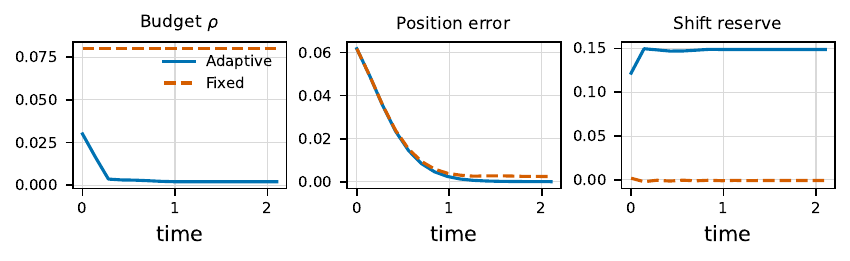}
\caption{Boundary case: announced budget, mean position error, and shifted-fallback
reserve for the adaptive and uncapped fixed rules.}
\label{fig:challenge}
\end{figure}

\begin{revision}
The adaptive run has $\mathcal R_{\min}=0.1219$, $2.7\%$ lower mean value, and final
position error $3.19\times10^{-5}$; the fixed values are $-2.01\times10^{-3}$ and
$2.46\times10^{-3}$. At zero-based step $t=2$, all newly returned NLP candidates are rejected.
The adaptive stored shift passes, completes all 16 certified steps, and finishes at
$2.93\times10^{-5}$; the fixed packet at that solve cannot certify its fallback. This
separates the stored-shift guarantee from feasibility of a new optimizer.
\end{revision}

\subsection{Terminal-set verification, repeated seeds, and network size}
\label{sec:terminal_verification}

\begin{revision}
Independently of the rollout, seed 20260801 generates 20,000 joint samples uniformly in
the product terminal ellipsoid. The projected controller is checked for successor-set
membership, terminal decrease, input/state constraints, and one-layer coupling reserve
at both current and successor states.

For every sampled $x_i\in\calX_{f,i}$, the exact nonlinear map with input projection is
used to evaluate
\begin{align}
r_{f,i}^{+}&=V_{f,i}(f_i(x_i,\kappa_i(x_i)))-c_i,\nonumber\\
r_{V,i}^{\rm D}&=V_{f,i}(f_i(x_i,\kappa_i(x_i)))-V_{f,i}(x_i)
+\ell_i^{\rm D}(e_i(x_i),\kappa_i(x_i)-u_{i,d}).
\label{eq:terminal_verification_residuals}
\end{align}
Here $r_{f,i}^{+}\leq0$ checks invariance, while $r_{V,i}^{\rm D}\leq0$ conservatively
checks the required local decrease because $\ell_i^0\leq\ell_i^{\rm D}$. The
test addresses Assumption~\ref{ass:terminal}; the network value condition in
Assumption~\ref{ass:value_terminal} remains analytical and is not inferred from samples.
Each rollout separately logs packet validity, hard-NLP acceptance, shifted reserve,
actual coupling margins, terminal residuals, and model consistency. The flag
\texttt{all\_steps\_certified} is true only when all applicable checks pass; optimizer
feasibility and preservation of the next shifted candidate remain separate fields.

At the sample level, none of the 20,000 joint states violates a terminal check. The
largest successor-set and terminal-decrease residuals are $-5.37\times10^{-2}$ and
$-7.41\times10^{-3}$, and the minimum communication and relative-attitude reserves are
$0.2465$ and $0.3227$. Fifty-one of 60,000 agent samples activate acceleration saturation,
confirming that the verification evaluates the projected nonlinear law when its input
projection is active rather than implicitly assuming an unsaturated LQR law. At the
rollout level, five further runs use seeds $\{3,7,11,19,23\}$ under the same moderate
initial-error bounds, and all five remain certified. Across these runs, the smallest
communication margin and shifted reserve are $0.2928$ and $0.2032$; the largest
successor-set residual, terminal-decrease residual, and final mean position error are
$2.40\times10^{-8}$, $-5.93\times10^{-5}$, and $0.0206$, respectively. The small
positive successor-set residual is well below the declared $10^{-6}$ numerical
acceptance tolerance.

\begin{figure}[t]
\color{revisionblue}
\centering
\includegraphics[width=0.93\columnwidth]{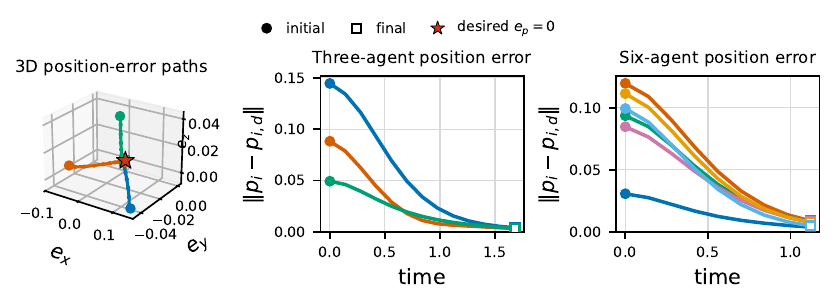}
\caption{Adaptive closed-loop trajectories. Circles, squares, and the red star denote
initial states, final states, and the desired error origin; right panels show every
agent's position-error norm.}
\label{fig:trajectories}
\end{figure}

Network size is evaluated by comparing the three-agent complete graph ($H=8$) with the
six-agent ring ($H=6$). They require, respectively, $5.94$ and $8.11$ seconds per
step, with mean solver iteration counts $8.78$ and $9.23$; both runs are certified, with
minimum reserves $0.0709$ and $0.2576$. These unoptimized Python/SLSQP timings are not
real-time claims: finite-difference NLP solves dominate, while the scalar cap scan is
negligible. Figure~\ref{fig:trajectories} shows the corresponding adaptive closed-loop
trajectories using $e_{p,i}=p_i-p_{i,d}$ and independently scaled axes to make the 3-D
motion visible; dots mark sampled positions. The final mean errors are $0.00351$ for
three agents and $0.00674$ for six.
\end{revision}

\section{Discussion}

\subsection{Relation to neighboring DMPC formulations}

\begin{revision}
Related DMPC methods certify different objects. Giselsson--Rantzer adapt tightening to
a finite dual-decomposition stopping test, whereas K\"ohler--M\"uller--Allg\"ower bound
inexact-dual suboptimality and violation
\cite{GiselssonRantzer2014,KohlerMullerAllgower2019}. Trodden's parallel update uses
tubes for uncertain linear systems, while the mutual-disturbance method exchanges
optimized disturbance and constraint sets \cite{Trodden2014,TroddenMaestre2017}.
Those bounds concern iteration errors or reachable sets within dual or tube updates;
ours concerns accepted coupling-output motion relative to a stored shift.

Accordingly, $\rho_i$ is imposed before each manifold NLP and $\mu_i$ is computed
afterward from two trajectories. No inner dual iteration, convex master problem, or
disturbance-set exchange is used. The matched five-sample envelope changes only the
update rule and gives $\bar\rho=0.0189$ versus $0.0090$ with the same hard certificate;
the fixed and trajectory-only baselines isolate adaptation and the residual margin.
\end{revision}

\subsection{Communication and computation}

Each sample has two pre-solve communication rounds and one post-solve round: shifted
trajectories are exchanged, budgets are announced, and accepted trajectories are then
returned. Relative to trajectory-only DMPC this adds one trajectory and two scalars,
but no inner iteration; $\chi_i$ is recomputed from trajectories already sent. All caps cost
$O(H\sum_{(i,j)\in\calE}n_q(i,j))$ scalar evaluations and add no decision variable;
finite-difference SLSQP dominates the reported prototype timing.

\subsection{Scope and Extensions}

Synchronous exchange makes packet timing explicit and every certificate directly
checkable. Time stamps identify stale packets; a missing exchange invokes a fail-safe
unless a known packet-age error is included in the nonideal
residual~\eqref{eq:nonideal_residual}. This interface offers a direct path to loss-tolerant,
asynchronous, and changing-graph extensions.

\begin{revision}
The present theorem and numerical claims remain synchronous; loss, delay, and
asynchronous operation are not presented as certified cases under these assumptions.
\end{revision}

The metric construction transfers naturally to Euclidean systems, while its $\SO$
instance retains group propagation, a geodesic packet metric, logarithmic terminal
coordinates, and intrinsic relative-attitude coupling. The guarantees are local,
consistently with the topology of $\SO$; when reserve is small, $\rho_i=0$ preserves the
certified shifted fallback.

The analytical terminal conditions are supported by 20,000 joint nonlinear samples,
including active input projection, with no sampled violation. Mission-specific continuum
verification can build on this design through ellipsoidal partitions, interval analysis,
or sums of squares.

\section{Conclusion}

This paper introduced prediction-budget DMPC with pre-solve enforceable margins,
post-solve verifiable updates, and an edge-wise cap that preserves a shifted fallback.
The local guarantees use explicit distance and attitude constants, hard terminal/safety
constraints, and separate nominal and nonideal execution statements. The spacecraft
study shows reduced tightening in the main case and lower error plus certified forced
fallback at the boundary. The framework also supports future asynchronous and richer
certificate extensions.


\end{document}